\documentclass[11pt]{article}

\usepackage[affil-it]{authblk}
\usepackage{caption}
\usepackage{subcaption}
\usepackage{bbm}
\usepackage{amsmath,amssymb,amsthm}
\usepackage{graphicx}
\usepackage{fullpage}
\usepackage{color}
\usepackage{multirow}
\usepackage[colorlinks=true, citecolor=blue]{hyperref}

\def\Cend{C^{\rm end}}
\def\Cmax{C^{\rm max}}
\def\Pmax{P^{\rm max}}

\newcommand{\BE}{\begin{equation}}
\newcommand{\EE}{\end{equation}}

\newtheorem{lem}{Lemma}

\newtheorem{pred}{Prediction}


\makeatletter
\def\blfootnote{\xdef\@thefnmark{}\@footnotetext}
\makeatother

\newcommand \Z  {{\mathbb Z}}
\newcommand \dist {{\rm dist}}
\newcommand  \R  {{\mathbb R}}
\newcommand \Half {{\mathbb H}}
\newcommand \C {{\mathbb C}}
\newcommand \SLE  {{${\rm SLE}_{8/3}$}}
\newcommand \Cont {{\rm Cont}}
\newcommand \saws {{\mathcal W}}
\newcommand \bridges {{\mathcal B}}
\newcommand \polygons {{\mathcal P}}
\newcommand \strip {{{\mathcal S}}}
\newcommand \E {{\mathbb E}}
\renewcommand \Im {{\rm Im}}
\renewcommand \Re {{\rm Re}}

\begin{document}
\title{Compressed self-avoiding walks, bridges and polygons}
\date{}

\author{Nicholas R Beaton}
\affil{Department of Mathematics and Statistics, University of Saskatchewan, Saskatoon, Canada}
\author{Anthony J Guttmann}
\author{Iwan Jensen}
\affil{School of Mathematics and Statistics, University of Melbourne, Victoria 3010, Australia}
\author{Gregory F Lawler}
\affil{Department of Mathematics, University of Chicago, Chicago, Illinois, USA}

\maketitle

\hspace{1.7cm} {\em Dedicated to R.J. Baxter, for his 75th birthday.}

\begin{abstract}We study various self-avoiding walks (SAWs) which are constrained to lie in the upper half-plane and are subjected to a compressive force. This force is applied to the vertex or vertices of the walk located at the maximum distance above the boundary of the half-space. In the case of bridges, this is the unique end-point. In the case of SAWs or self-avoiding polygons, this corresponds to all vertices of maximal height.  We first use the conjectured
  relation with 
  the Schramm-Loewner evolution to predict the form of the
  partition function including the values of the exponents, 
  and then we use
     series analysis to test these
     predictions.\blfootnote{Email: {\tt n.beaton@usask.ca}, {\tt guttmann@unimelb.edu.au}, {\tt i.jensen@ms.unimelb.edu.au}, {\tt lawler@math.uchicago.edu}}

\end{abstract}
Keywords: self-avoiding walk; Schramm-Loewner evolution;
 bridges; compressive force

\section{Introduction}\label{sec:intro}

Self-avoiding walks (SAWs) were initially introduced~\cite{Flory1949Configuration,Orr1947Statistical} as a model of long linear polymer chains. Since that time they have been explored both as a polymer model in statistical mechanics and as an independent problem of interest to combinatorialists and theoretical computer scientists.

In the simplest case, one considers walks on the edges of a lattice,
starting from a fixed origin and forbidden from visiting a vertex more than once. We will denote by $c_n$ the number of such objects with exactly $n$ steps. 
A simple subadditivity argument~\cite{Hammersley1957Percolation} shows that  
\[  \lim_{n\to\infty} \frac1n \log c_n =\beta  := \inf_{n \to \infty}  \frac1n \log c_n
 .\]

  The quantity $ e^{\beta} $ is known as the \emph{connective constant},
  and depends on the lattice in question.
The   connective constant 
 is unknown for all but one regular lattice, though numerical estimates have been computed for a number of cases. Of interest here is the current best (nonrigorous) estimate 
 for the connective
 constant  of SAWs on the square lattice \cite{Clisby2012New},
\[  e^{\beta} =2.63815853035(2).\]
The one exception is the honeycomb lattice, for which
it was predicted in 
\cite{Nienhuis1982Exact} and  proved in \cite{DuminilCopin2012Connective} that ${ e^{\beta}}=\sqrt{2+\sqrt{2}}$.  
We will set 
\[           q_n = e^{-\beta n} \, c_n , \]
and note that $q_n^{1/n} \rightarrow 1$.  We also set
\begin{equation}\label{eqn:pn_defn}
p_n = \frac{q_{2n}}{(q_n)^2},
\end{equation}
which is the probability that two SAWs of length $n$ can be concatenated to give
a SAW of length $2n$.  We will be primarily interested in
the square lattice $\Z^2$, but we first review what is known
about $\Z^d$.

In the trivial case $d=1$, $c_n = q_n = 2, p_n = 1/2$ for
all $n \geq 1$.  If $d \geq 5$, it
is known~\cite{Hara1992Selfavoiding}   that $p_n \rightarrow
p_\infty > 0$; this is suggested by (but harder to prove than) 
the fact that two independent ordinary walks in five
or more dimensions have a positive probability of no
intersection.  
Little is known rigorously about $q_n$
  for dimensions $2,3,4$. It is widely believed that 
  \[    q_n \sim   \begin{cases}A\, n^{\gamma_d - 1}  & d=2,3  \\
  A\, (\log n)^{1/4} & d = 4.\end{cases}\]
  Here $ A $ is a lattice dependent constant\footnote{Thoughout
  this paper we will use $A$ for a lattice dependent constant;
  however, its value will change from line to line. 
  However, when we write $A_1,A_2,\ldots$, these values
  will stay fixed.  In all cases, the values of these
  ``nonuniversal'' constants are not important to us.}, 
  $\gamma_2 = 43/32$ (this was first predicted
  by Nienhuis~\cite{Nienhuis1982Exact}
  and also follows from SLE computations, see Section \ref{scalingsec});
  $\gamma_3=1.156957 \pm 0.000009$ \cite{Clisby_gamma}.
For $d=4$, see 
 \cite{Bauerschmidt2014Logarithmic} for closely related results on
 a similar model.
In particular $p_n \asymp n^{-11/32}$ for $d=2$.   For
ordinary walks without self-avoidance,
 it is known \cite{cutpoints,LSWinter} that $p_n \asymp n^{-5/8}$ where $p_n$
 here is defined to be the probability that the 
 set of points visited by   two simple
 random walks of $n$ steps starting at adjacent points are disjoint.
(In our notation, $a_n \sim b_n$ means that $a_n/b_n
\to 1$ and $a_n \asymp b_n$ means that there exists $c$ such
that $a_n/c \leq b_n \leq c \, a_n$ for all $n$ sufficiently
large.)

The average size of an $n$-step SAW is often measured by the \emph{mean squared end-to-end distance} $\E_n\left[ |\omega_n|^2 \right]$, 
where $\E_n$ denotes expectation
with respect to   
  the uniform probability measure on walks of length $n$ starting at the origin. 
  It is widely believed that
  \[     \E_n \left[ |\omega_n|^2 \right] \sim \left\{
  \begin{array}{ll} A \, n^{2\nu} , & d \neq 4 \\
                   A \, n \, (\log n)^{1/4} , & d = 4  \end{array}
                    \right. , \]
 where $\nu  $ is a dimension dependent exponent. 
In the trivial case $d=1$, we have $\nu = 1$ and for
$d \geq 5$  this has been proved~\cite{Hara1992Selfavoiding} with $\nu = 1/2$. 
 In two dimensions it is predicted
  (see~\cite{Nienhuis1982Exact} and the SLE derivation below)
 that $\nu=3/4$, while in three dimensions the best current estimate is $\nu =  0.587597 \pm 0.000007$~\cite{Clisby_nu}.

For the remainder of this paper, we set $d=2$ and write
$\Z^2 = \Z \times i \Z$. 
 The theoretical (although at this point not mathematically rigorous) understanding
 of two-dimensional SAWs has been deepened by considering the implications of 
 assuming that the walks have a conformally invariant scaling limit.  Rather than
 working with walks of a fixed number of steps, it is more useful to consider the
 measure on SAWs of arbitrary length that gives measure $e^{-\beta n}$ to each
 SAW of $n$ steps, with any starting vertex.
 We will write $Q$ for this measure, that is, for any set $V$
 of SAWs,
\begin{equation*}
Q(V) = \sum_{n=0}^\infty  e^{-\beta n} \, c_n(V),
\end{equation*}
 where $c_n(V)$ is the number of walks of length $n$ in the set $V$.
In particular, if $V_n$ is the set of SAWs of length $n$ starting
at the origin, $Q(V_n) = q_n$.

 For any lattice spacing $1/N$, we can write $Q_{N}$ for the measure $Q$
 viewed as a measure on scaled SAWs on the lattice $N^{-1} \, \Z^2$.
 Then (see Section \ref{scalingsec} for more detail)
  if it is assumed that the measures $Q_N$ have a limiting
 measure that is conformally invariant, then this limit must be a version
 of   {\em Schramm-Loewner evolution   with parameter $8/3$}
 which we denote by \SLE.
 (There is a one-parameter family of  SLE  curves, but there is only
 one value $8/3$ that gives a measure on simple curves satisfying
 a ``restriction property'' which would have to hold for a scaling limit
 of SAWs.  For this reason we will only discuss \SLE~in this
 paper.  The values for exponents below such as $b,\tilde b$ are particular
 to the parameter $8/3$.)  The analogues of the
 exponents $\nu$ and $\gamma$ can be computed for \SLE~and give
 $\nu = 3/4, \gamma = 43/32$.  These are mathematically rigorous results
 about \SLE~although it is still only a conjecture (with strong 
 theoretical and numerical
 evidence) that it is the limit of the measure $Q$.

In this article we consider the enumeration of several different types of SAWs constrained to lie in the upper half-space of the square lattice, weighted with a Boltzmann weight corresponding to the maximum distance above the boundary of the half-space reached by the walks. When this weight is less than one, walks which step far away from the boundary are penalised, and one can thus view this system as a model of polymers at an impenetrable surface, subject to a force compressing them against the surface.

\section{The model}\label{sec:model}

Let $\saws$ denote the set of SAWs in $\Z^2 = \Z \times i \Z$ starting at the origin
 and let $\saws^+$ be the set of half-plane walks in $\saws$, that is, walks that
stay in
\[   \mathbb H := \{ x+iy \in \Z^2: y \geq 0 \} . \]
We let $\saws_n,\saws_n^+$ denote the set of such walks of length $n$. 

If $\omega = [0,\omega_1,\ldots,\omega_n] \in \saws^+$ we write 
 \[  h(\omega) = \max\{\Im(\omega_j):j=0,1,\ldots,n\} , \;\;\;\;
     y(\omega) = \Im(\omega_n) . \]
      We call
a walk $\omega = [\omega_0,\ldots,\omega_n] \in \saws^+_n$ a {\em bridge} 
if $h(\omega) = y(\omega)$.
 The utility of bridges lies in the fact that they can be freely concatenated to form larger bridges (with the addition of an extra step between).
      We write $\bridges_n$ for the set of bridges of length $n$ and $\bridges$ for the set of all bridges.
 We write
$\langle \cdot \rangle, \langle \cdot \rangle^+,\langle \cdot\rangle^b, \langle \cdot \rangle_n,
\langle \cdot \rangle_n^+,\langle \cdot\rangle^b_n$ for integrals (in fact, sums) with respect to the measure $Q$ restricted
to the sets $\saws, \saws^+,\bridges,\saws_n, \saws^+_n,\bridges_n,$
  respectively.  
For example,
\[\langle 1 \rangle_n = \sum_{\omega\in\saws_n} 1\cdot Q(\omega) = c_n e^{-\beta n} = q_n.\]
  Note
that $\langle 1 \rangle = \langle 1 \rangle^+ = \langle 1 \rangle^b = \infty$, as the corresponding generating functions diverge at $\beta$.

A walk $\omega = [\omega_0\ldots,\omega_{n-1}]\in\saws_{n-1}$ is called a \emph{polygon} if $|\omega_{n-1} -\omega_0| = 1$, that is, if the first and last vertices of $\omega$ are adjacent. An edge can be added between $\omega_0$ and $\omega_{n-1}$ to form a simple closed loop. We say that $\omega$ has length $n$. Let $\polygons$ be the set of all polygons and $\polygons_n$ the set of polygons of length $n$, with $\polygons^+$ and $\polygons_n^+$ the analogous sets restricted to those polygons staying in $\mathbb H$. Then let $\langle\cdot\rangle^p, \langle\cdot\rangle^p_n, \langle\cdot\rangle^{p+}, \langle\cdot\rangle^{p+}_n$ be the integrals with respect to the measure $Q$ restricted to $\polygons, \polygons_n, \polygons^+, \polygons_n^+$ respectively.

A polymer model which has been considered in the past~\cite{Guttmann2014Pulling, vanRensburg2013Adsorbed, Krawczyk2005Pulling, Skvortsov2012Mechanical} is that of polymers terminally attached to an impenetrable surface, with an external agent exerting a force on the non-attached end of the polymer, in a direction perpendicular to the surface. This reflects real-world experiments where polymers are pulled using optical tweezers or atomic force microscopy~\cite{Zhang2003}. This can be modelled using half-plane SAWs with a  Boltzmann weight  associated with the height of the end of the walks above the surface. The partition function of such a model is
\[\Cend_n(u) =  \langle  e^{-u y(\omega)} \rangle^+_n .\]
  One can interpret   $u$ as the reduced pulling force; when $u < 0$ the force is pulling up away from the surface, and when $u > 0$ the force is pulling down towards the surface. 
  As we can see below,  it is predicted that
   for fixed $y$, as $n \rightarrow \infty$,
 \[ \langle  \mathbbm 1_{\{y(\omega) = y\}} \rangle^+_n \asymp   
       n^{-19/16} \,  y^{25/28}, \]
 and hence the integral is dominated by the small values of $y$
which gives the following.
\begin{pred}\label{pred:endpoint}
For each $u > 0$, there exists a constant $A = A(u)$ such that
\[  \Cend_n(u) \sim A \, n^{-19/16}.\]
\end{pred}

It has recently been shown~\cite{Beaton2015} that this model undergoes a phase transition at $u=0$; that is, the free energy 
\[\kappa(u) := \lim_{n\to\infty}\frac1n \log C^\text{end}_n(u) + \beta\]
is non-analytic at $u=0$, being equal to $\beta$ for $u\geq0$ but strictly greater than $\beta$ for $u<0$. This is the transition between the \emph{free} and \emph{ballistic} phases of self-avoiding walks.

We will consider three similar quantities
\begin{align}
\Cmax_n(u)  &=  \langle  e^{-u h(\omega)} \rangle^+_n ,\label{eqn:qoi_walks}\\
B_n(u) &=  \langle  e^{-u h(\omega)} \rangle^b_n  =  \langle  e^{-u y(\omega)} \rangle^b_n,\label{eqn:qoi_bridges}\\
\Pmax_n(u) &= \langle e^{-u h(\omega)}\rangle^{p+}_n \label{eqn:qoi_polygons}
\end{align}
in the $u > 0$ regime.  We will use SLE to give predictions for the asymptotic
values and then will analyse exact   enumerations  to test the predictions. (For polygons, a numerical analysis has been conducted elsewhere \cite{gjw15}).
In this case the integral does not concentrate on the lowest order terms,
and there will be a stretched exponential decay.  We state the prediction
now.  We note that the constants $\lambda_1$ and $\lambda_2$ do not depend on $u$.

\begin{pred}\label{pred:main}
There exists $\lambda_1,\lambda_2 > 0$ such that for each $u > 0$, there exist $u$-dependent constants $A^+_u, A^b_u, A^{p+}_u$ such that
\begin{align*}
\Cmax_n(u) &\sim A^+_u \, n^{3/16} \, \exp\{-\lambda_1 u^{4/7} \, n^{3/7}\},\\
B_n(u) & \sim A^b_u \, n^{-13/28} \, \exp\{-\lambda_1 u^{4/7} \, n^{3/7}\},\\
\Pmax_n(u) &\sim A^{p+}_u\, n^{-11/7}\, \exp\{-\lambda_2 u^{4/7}\,n^{3/7}\}.
\end{align*}
\end{pred}

\section{A simpler model: ordinary random walks}

Here we consider an analogous problem for simple random walks
that can be solved rigorously.
Let $\langle \cdot \rangle_n^{+,{\rm simp}}$ denote expectations
with respect to simple random walks of length $n$ started at the
origin restricted to stay in the upper half-plane $\mathbb H$.  In other
words,
\[    \langle Y \rangle_n^{+,{\rm simp}}  =
   4^{-n} \sum_{\eta}  Y(\eta) \]
   where the sum is over all nearest neighbour (not necessarily
   self-avoiding) paths of $n$ steps
   starting at the origin with
   $\eta \subset \mathbb H$.  We write $h(\eta)$ for the maximal
  imaginary component of $\eta$.  We will give the asymptotics
  of  $\langle e^{-u h(\omega)} \rangle_n^{+,{\rm simp}} $ for $u > 0$
   as $n
  \to \infty$.
 
Let $c(n,h; r,y)$ denote the number of simple
random walk paths of   $n$ steps
starting at height $r$; ending at height $y$; whose
height stays  
between $0$ and $h-2$ for all times. 
  Then by viewing the imaginary part as a one-dimensional
  random walk killed when it leaves the interval $[0,h-2]$, we can see that 
$4^{-n} \, c(n,h;r,y) = J^n_h(r,y)$ 
where   $\{J_h(r,y): 0  \leq r,y  \leq h-2\}$ is the symmetric
matrix
given by
\[    J_h(r,r) = \frac 12 , \;\;\;\;J_h(r,r\pm 1) = \frac 14 . \]
The eigenvalues and eigenfunctions of $J_h$ are well known, and can be computed using a Fourier series (sum) over $\{0,1,\ldots,h-2\}$, and simple trigonometric identities.  If
${\bf v}_j$ denotes the vector with components $\left[\sin\left(\frac{j(k+1)\pi}{h}\right)\right]_ {k=0,\ldots,h-2}$, 
then
\[       J_h \, {\bf v}_j = \lambda_j \, {\bf v}_j, \;\;\;\;
   \mbox{where  } \;\; \lambda_j = 
 \frac 12 + \frac 12 \cos\left(
 \frac{j \pi}{h}\right). \]
By diagonalizing $J$, we can see that
\[      J_h^n(r,y)= \frac{2}{h}
 \sum_{j=1}^{h-1} \left[\frac 12 + \frac 12 \cos\left(
 \frac{j \pi}{h}\right)\right]^n \, \sin\left[\frac {(r+1) j \pi }{h}
  \right] \, \sin \left[ \frac{(y+1) j\pi}{h}\right].\]
If $h,n \to\infty$ with $h^2 \ll n$, then the $j=1$ term
dominates the asymptotics and
\begin{eqnarray*}
 J_{h}^n(r,y) & \sim &  \frac{2}{h} \,\left[\frac 12 + \frac 12 \cos\left(
 \frac{  \pi}{h}\right)\right]^n\,
 \sin\left[\frac {(r+1)  \pi }{h}
  \right] \, \sin \left[ \frac{(y+1)  \pi}{h}\right]\\
 & \sim &  \frac{2}{h} \exp\left\{- \frac{n\pi^2}{4h^2} \right\}\,
\sin\left[\frac {(r+1)  \pi }{h}
  \right]  \, \sin \left[ \frac{(y+1) \pi}{h}\right].
  \end{eqnarray*}
  In particular,
  \[   J_{h}^{n}(0,y) \sim \frac{2\pi}{h^2} \,
   \exp\left\{- \frac{n\pi^2}{4h^2} \right\}  \,  \sin \left[ \frac{(y+1) \pi}{h}\right].\]

Let $F_n(h)$ denote the probability that a simple
random walk starting at the origin in $\Z^2$ up to time $n$
stays in $\mathbb H$  and has maximum imaginary
component less than or equal to $h-2$.  
Then, if  $1 \ll h \ll\sqrt n $,
\begin{equation}\label{eqn:Fnh_asymp_exp}
F_n(h) = \sum_{y=0}^{h-2}  J_{h}^n(0,y)  \sim \frac{4}{h} \, \exp\left\{- \frac{n\pi^2}{4h^2} \right\} .
\end{equation}
      Note that $F_n(h)-
F_n(h-1)$ gives the probability that the  
  maximal imaginary component is 
exactly $h-2$.
 While we used the exact solution to determine these asymptotics,
 one can give a short heuristic to explain why the answer should be of the
 form $ c \, h^{-1} \, \exp\{-u(n/h^2)\}$ for some $c,u$.
 In the first $h^2$ steps, the probability that a walk
 stays in the upper half plane is $O(h^{-1})$ by the
 ``gambler's ruin'' estimate for one-dimensional walks.  After this,
 every time the walk moves $h^2$ steps, the chance that its imaginary
 component leaves $[0,h-2]$ is strictly between $0$ and $1$.  If
 we call this probability $1-e^{-a}$, then the probability that
 the walk stays in for $n$ steps should look like $[e^{-a}]^{n/h^2}$.  As $h \to \infty$, the exponential constant $a$ can be computed
 as an eigenvalue for Brownian motion.
 
We see that 
\begin{align}
 \langle e^{-u h} \rangle_n^{+,{\rm simp}}  & =
  \sum_{h=0}^\infty e^{-uh} \, [F_{n}(h+2) - F_n(h+1)]\notag\\
 & = e^{2u} \sum_{h=2}^\infty e^{-uh} \, [F_{n}(h) - F_n(h-1)]\notag\\
 & = e^{2u} (1-e^{-u}) \sum_{h=2}^\infty e^{-uh} \,  F_{n}(h) \notag\\
 & \sim 4 \, e^{2u} (1-e^{-u})\sum_{h=2}^\infty h^{-1}e^{-uh}\exp \left\{- \frac{n \pi^2}{4h^2}\right\},\notag\\
\intertext{where we have used~\eqref{eqn:Fnh_asymp_exp} for $h\ll\sqrt{n}$. For larger $h$, both $e^{-uh}F_n(h)$ and $h^{-1}e^{-uh}\exp\{-n\pi^2/4h^2\}$  decay quickly to 0. Then using \eqref{integralest} in the Appendix,}
&  \sim  4 \, e^{2u} (1- e^{-u  } ) \int_0^\infty x^{-1} \exp\left\{-\left(\frac{n\pi^2}{4x^2}+ux\right)\right\}dx \notag\\
   & \sim A_u \, n^{-1/6} \, \exp \left\{-\lambda_u \,n^{1/3} \right\}, \label{eqn:rw_asymps} 
\end{align}
where $A_u,\lambda_u$ can be derived from~\eqref{integralest}:
\begin{align*}
A_u &= 2^{8/3}\cdot3^{-1/2}\cdot\pi^{1/6}\cdot u^{-1/3}\cdot e^u(e^u-1),\\
\lambda_u &= 2^{-4/3}\cdot3\cdot\pi^{2/3}\cdot u^{2/3}.
\end{align*}

\section{Scaling limit}  \label{scalingsec}

The structure of this section is as follows. In Section~\ref{ssec:sle} we briefly review Schramm-Loewner evolution and demonstrate that, under the assumption that SLE$_{8/3}$ is the scaling limit of two-dimensional SAWs, it can be used to predict the asymptotic behaviour of the $Q$-measures of SAWs under various types of restrictions. In Section~\ref{ssec:saw_strip} we turn to the specific problem of compressed SAWs, which can be phrased in terms of walks restricted to a horizontal strip of fixed height, and together with Lemma~\ref{lem:integralest} in the Appendix derive the first two results in Prediction~\ref{pred:main}. In Section~\ref{ssec:polygons_strip} we consider polygons restricted to a horizontal strip. These require a slightly different approach, and we use results derived from restriction measures and large deviation theory to obtain the third part of Prediction~\ref{pred:main}.

\subsection{Schramm-Loewner evolution}\label{ssec:sle}

Here we review  predictions for the measure
$Q$  using the Schramm-Loewner evolution (SLE) in \cite{LSWsaw}
and extend them to incorporate information
about  the   natural parametrization
\cite{LShef,Content}. We start by explaining precisely
what is meant by the scaling limit of SAWs.
The starting point 
 is  to assume the existence of the mean-square displacement
exponent $\nu$ and two scaling exponents $b$ and $\tilde b$
that we discuss below. No
\emph{a priori} assumptions about the values of $\nu,b,
\tilde b$ are made.  It is convenient to view
a SAW $\omega = [\omega_0,\ldots,\omega_n]$
 as a continuous time process $\omega:
[0,n] \rightarrow \Z^2$ with $\omega(k) = \omega_k$ and otherwise
defined by linear interpolation.
The exponent $\nu$ is expressed by stating that 
  the fractal dimension of the
paths in the scaling limit is $1/\nu$. 
For every
lattice spacing  $1/N$ and $c > 0$, and every SAW $\omega = [\omega_0,\omega_1,\ldots,
\omega_n]$, we define the scaled walk $\omega^N$ on the lattice $N^{-1} \, \Z^2$
  by 
\begin{equation}  \label{natpar}
    \omega^N(t)  =   N^{-1} \, \omega(tN^{\nu}/c), \;\;\;\;
   0 \leq t \leq cn/ N^{\nu}. 
   \end{equation}
The measure $Q$ induces a measure on scaled paths in $N^{-1} \, \Z^2$.
Note that we do not change the measure except that we view it
as a measure on scaled paths. 
  If $\omega$ has length $n$,
then the time duration of the curve $\omega^N$ is $c n/ N^{\nu}$.
We discuss below what value to choose for $c$.

Let $\mathbb H$
denote the  upper half-plane as before.
If $z \in \Half  $ and $n$ is an integer, 
  let  $V_{0,nz}$ denote  
the set of SAWs in   $\Z^2$  starting at the origin and ending at $nz$.
(We are assuming for  notational   ease that
 $nz$ is  a lattice point; if not, choose the closest such
point.)
 Let $V_{0,nz}^+$ denote the set of half-plane SAWs
 in $V_{0,nz}$. The various types of restricted SAWs considered in this section can all be regarded in terms of $V_{0,nz}$ and $V^+_{0,nz}$, and so the limiting behaviour of the $Q$-measures of these sets is central to our predictions.

\begin{itemize}
\item {\bf Assumption}:  There exists a boundary scaling
exponent $b$, an interior scaling exponent
$\tilde b$, and  continuous functions $\psi,\psi_+$ on $
\Half $ 
such that as $n \rightarrow \infty$
\begin{equation}  \label{as1}
              Q[V_{0,nz}] \sim  
 \psi(z) \, n^{-2 \tilde b}. 
 \end{equation}
 \begin{equation}  \label{as2}
          Q[V_{0,nz}^+] \sim  \psi_+(z) \, n^{-b}
  \, n^{-\tilde b}. 
  \end{equation}
 \end{itemize}
 The scaling exponents $b,\tilde b$,
 come from the power law behavior in the total mass
 (partition function) of the measure $Q$.
 
   The scaling limit of the SAW
 is a collection of measures $\{\mu(0,z), \mu_\Half(0,z)\}$
 that are the limit of the measures
 $        n^{2 \tilde b} \, Q$ and $n^{b +\tilde b} Q$
 restricted to the sets $V_{0,nz}, V_{0,nz}^+$,
 respectively.  
 These are finite measures on simple curves from
 $0$ to $z$ in $\C$ or $\Half$, respectively.  They
 have total mass (normalized partition function) 
 $\psi(z),\psi_+(z)$ and one can get probability measures
 $\{\mu^\#(0,z), \mu_\Half^\#(0,z)\}$ by normalization.
 
 More generally, take $z,w$ to lie in the interior of $\Half$ (i.e. not on the boundary of the half-plane), and define $V_{nz,nw}$ to be the set of SAWs in $\mathbb Z^2$ starting at $nz$ and ending at $nw$. Then the measure $\mu(z,w)$ is the limit of the measure $n^{2\tilde b}Q$ restricted to the set $V_{nz,nw}$. Then if $D$ is an open subset of $\C$, 
 and $z,w$ are distinct interior points, we can define
 $\mu_D(z,w)$ to be the measure $\mu(z,w)$ restricted
 to curves that stay in $D$.  Equivalently, it is the
 scaling limit as above of SAWs from $nz$ to $nw$ that
 stay in $nD$.  This collection of measures satisfies
 the {\em restriction property}: if $D \subset D'$
 and $z,w \in D$, then $\mu_{D}(z,w)$ is $\mu_{D'}(z,w)$
 restricted to curves that  stay  in $D$.   Similarly,
 if $z \in \partial D$ and $w \in D$ we can define the
 measure $\mu_{ D}(z,w)$ provided that the boundary
 is sufficiently smooth at $z$ (there are lattice issues
 involved if the boundary of $D$ is not parallel with
 a coordinate axis (see \cite{LKen}), but we will not worry about this
 here.)
      
In addition to the restriction property, we will assume that the measures $\mu_D(z,w)$ satisfy two further criteria.
First, we  assume that the limit is in some sense
 conformally invariant:
 
 \begin{itemize}
 
 \item  The probability measures $\mu_D^\#(z,w)$ considered
 as a measure on curves modulo reparametrization\footnote{That is, if two curves $\gamma:[0,1]\to\mathbb C$ and $\tilde\gamma:[0,T] \to \mathbb C$ trace out the same path (but over different times), they are viewed as the same curve.}
 are conformally invariant.
 
 \end{itemize}
 
Given this, one can define the 
probability measures $\mu_D^\#(z,w)$ even for boundary
points at which the boundary is not smooth.  This allows
one to write down another property that the scaling limit
of SAW should have. It can be considered a property
of curves up to reparametrization.

\begin{itemize}

\item  Domain Markov property.  Suppose an initial
segment $\omega[0,t]$ is observed of a curve from $\mu_D^\#(z,w)$.
Then the distribution of the remainder of the curve
is that of $\mu_{D'}^\#(\omega(t),z)$ 
where $D'$ is the slit domain $D \setminus \omega[0,t]$. 

\end{itemize}
 
The three properties: conformal invariance, domain Markov property,
and restriction property characterize the scaling limit in that
there is only one family of measures on simple curves that
satisify all three.  The measure is called \SLE,
 the {\em Schramm-Loewner
evolution with parameter $8/3$}.  (There are SLE measures
with other parameters but they do not satisfy the restriction
property which a scaling limit of SAW would necessarily
satisfy.)

The critical exponents for SAW can be deduced theoretically
(but not at the moment mathematically rigorously) from 
mathematically rigorous theorems about 
  \SLE.  First, the fractal dimension of \SLE~paths is $4/3$.
  This was first proved as a statement about the Hausdorff
dimension, but for our purposes, it is more useful to
think of it in terms of the {\em $4/3$-Minkowski content}.
If $X$ is a compact subset of $\C$, then the $4/3$-dimensional
Minkowski content of $X$ is defined by
\[      \Cont(X) = \Cont_{4/3}(X) =
   \lim_{\epsilon \downarrow 0} \epsilon^{-2/3}
    \, {\rm Area}\{z: \dist(z,X) \leq \epsilon \}, \]
    provided that the limit exists.   If $\omega:[0,T] \rightarrow
    \C$ is a curve, then we can view $\Cont(\omega[0,t]) $
    as the ``$4/3$-dimensional length'' of $\omega[0,t]$.
 It has been shown that for the \SLE \,measures, the function
 $t \mapsto \Cont(\omega[0,t])$ is continuous and strictly
 increasing. For this reason we can parametrize our \SLE \,paths
  so that at each time $t$, $\Cont(\omega[0,t])=t$.  This
 is called the {\em natural parametrization} and is the
 $4/3$-dimensional analogue of parametrization by arc length.
 For the remainder we will assume that we have parametrized
 our curves in this way.  We conjecture that we can choose
 $c$ in \eqref{natpar} so that the curves in the scaling limit
 have the natural parametrization, and for the remainder
 we assume we have chosen this $c$.

 If $\omega:[0,T_\omega] \rightarrow \C$ is a curve
 taking values in a domain $D$ and $f:D \rightarrow f(D)$
 is a conformal transformation, then we define the image
 curve $f \circ \omega$ to be the image with the parametrization
 adjusted appropriately.  To be precise, the time
 to traverse $f(\omega[0,t])$ is 
\[    \int_0^{t} |f'(\omega(s))|^{4/3}
 \,ds . \]
 For example, if $f(z) = rz$ with $r > 0$, then the total time
 to traverse the curve is multiplied by $r^{4/3}$; this is
 the scaling property of a ``$4/3$-dimensional length''.
 If $\mu$ is a measure on curves in $D$, then we define
 the measure $f \circ \mu$ by
 \[    f\circ \mu(V) = \mu\{\omega: f \circ \omega \in V \}. \]
 We can now define the \SLE \,measures $\{\mu_D(z,w)\}$
 as a family of measure with the following properties. Here
 we can choose $z,w \in D$ (whole-plane \SLE); $z \in \partial
 D, w \in D$ (radial \SLE) or $z,w \in \partial D$ (chordal
 \SLE).  In the case of boundary points, we assume that $\partial
 D$ is locally analytic around the points.  The conformal
 covariance rule is
 \[     f \circ \mu_D(z,w) = |f'(z)|^{b_z} \, |f'(z)|^{b_w} \,
   \mu_{f(D)}(f(z),f(w)) ,  \]
   where $b_z$ and $b_w$ take values  $b$ or $\tilde b$, depending on
   whether $z$ and $w$ are boundary points or interior points.
 We can write
 \[          \mu_D(z,w) = \Phi_D(z,w) \, \mu_D^\#(z,w) , \]
 where $\Phi_D(z,w)$ is the total mass (normalized
 partition function) of the measure $\mu_D(z,w)$.  The
 covariance rule can then be stated as conformal invariance
 of the probability measures and a scaling rule for
 the partition functions,
 \[    \Phi_D(z,w)  =  |f'(z)|^{b_z} \, |f'(z)|^{b_w} \,
  \Phi_{f(D)}(f(z),f(w)) . \]
 Rigorously, we can show that 
 \[                  b = \frac 58, \;\;\; \tilde b = \frac{5}{48} , \]
 and the chordal, radial, and whole plane partition
 functions are each determined uniquely up to a mulitplicative
 constant by the scaling rule and the restriction property.
 For one choice of these constants, we conjecture that 
 we get the scaling limits of SAW as above.  We will assume
 this value of the constant (we do not know, even nonrigorously,
  this value). 
  
In order to relate the values of \SLE \,to SAWs of a given
number of steps, we must consider the time duration 
$T_\omega$ of the paths.  While it is known that the time
duration of \SLE \, paths can be given by the Minkowski
content, the next statement is currently only a conjecture about
\SLE.
\begin{itemize}
\item  The measures $\mu_D(z,w)$ can be written as
\[         \int_0^\infty \mu_D(z,w;t) \, dt , \]
where for each $t$,  $\mu_D(z,w;t)$ is a (strictly positive) finite
measure on curves $\omega$ from $z$
to $w$   with $T_\omega = t$.  We can write  
\[                 \mu_D(z,w;t) = \Phi_D(z,w;t)
 \, \mu_D^\#(z,w;t), \]
for a probability measure  $\mu_D^\#(z,w;t)$ 
with
$\Phi_D(z,w;t)$  continuous in $t$.
\end{itemize}
In other words,  with respect
to the probability measure $\mu_D^\#(z,w)$,
the random variable $T_\omega$ has a strictly positive
density given by $\Phi_D(z,w;t)/\Phi_D(z,w)$.

We now use this to compute (nonrigorously)
the exponent $\gamma$ for SAW in $\Z^2$.  
For the remainder of the paper, if we establish that two sequences $a_n$ and $b_n$ satisfy $a_n\asymp b_n$, we will often conclude
by conjecturing that there exists $c$ such that $a_n \sim
c \, b_n$.  We will not justify this last step, and, indeed,
in most cases we are just assuming that things are ``nice''.

Let $m_n$ denote the 
$Q$ measure of the
set  of SAWs starting at the origin
 of length  between $n^{4/3}$ and $(2n)^{4/3}$.
Then $m_n \asymp  n^{4/3} \, q_{n^{4/3}}$. 
Typically, these walks should go distance of order
$n$ and hence, we would expect that
the $Q$ measure of the
set  of SAWs starting at the origin
 of length between $n^{4/3}$ and $(2n)^{4/3}$ whose
 endpoint is distance between $n$ and $2n$ 
 is also comparable to $m_n$.  There are of order $n^2$
possibilities for this
endpoint, and the number for each of the possible
 endpoints should be comparable, so we would expect
 that the number of SAWs starting at $0$ ending
 at $n$ of  between $n^{4/3}$ and $(2n)^{4/3}$ steps
 is comparable to $m_n/n^2$.  By the scaling assumption,
 this is comparable to $n^{-2 \tilde b} = n^{-5/24}$ .
 Hence $m_n \asymp n^2 \, n^{-5/24} = n^{43/24}
  = [n^{4/3}] \, [n^{4/3}]^{11/32}.$ This
  gives the prediction $q_{n^{4/3}} \asymp [n^{4/3}]^{11/32}$,
\begin{pred}\label{pred:SAWs_asymp}
There exists $A$ such that
\[\langle 1 \rangle_n = q_n  \sim A \, n^{11/32}. \]
\end{pred}

The argument to compute  $m_n^+$,
 the $Q$ measure of SAWs starting
 at the origin of length $n^{4/3}$ to  $2n^{4/3}$ 
  restricted to the half-plane, 
  is similar.  The only difference is that
we now have one boundary point.  Hence, instead of
$n^{-2 \tilde b}$ we have $n^{-(b + \tilde b)}
  = n^{-35/48}$, and  $m_n^+ \asymp n^2 \, n^{-35/48}
    = n^{61/48} = [n^{4/3}] \, [n^{4/3}]^{-3/64}. $
\begin{pred}\label{pred:halfplane_asymp}
There exists $A^+$ such that
\[\langle 1 \rangle^+_n \sim A^+ \, n^{-3/64}. \]
\end{pred}

Finally, to compute $m_n^b$, the $Q$ measure
of bridges starting at the origin of length $n^{4/3}$ to  $2n^{4/3}$ 
  restricted to the half-plane, 
we now have two boundary points.  Hence,
$m_n^b\asymp n^2 \, n^{-5/4} = [n^{4/3}] \,[n^{4/3}]^{-7/16} $.
\begin{pred}\label{pred:bridge_asymp}
There exists $A^b$ such that
\[\langle 1 \rangle^b_n \sim A^b \, n^{-7/16}. \]
\end{pred}

Predictions~\ref{pred:SAWs_asymp} and~\ref{pred:halfplane_asymp} imply that the probability that
a SAW of length $n$ is a half-plane walk is comparable
to $n^{-25/64}$,  and Predictions~\ref{pred:halfplane_asymp} and~\ref{pred:bridge_asymp}
imply  that the
probability that a half-plane SAW of length $n$ is a bridge is comparable
to $n^{- 25/64}$. 
(It is not surprising that these are the same -- in the first case, we take SAW and insist that no vertex lie below the starting point; in the second case, we take a half-plane SAW and insist that no vertex lie above its final point. Symmetry suggests that the fraction of walks we keep ought to be roughly the same in both cases.)
Hence the probability that a SAW of length $n$ is
a bridge is comparable to $n^{-25/32}$. 

We finish by mentioning two other well-known conjectures.  Let $\Phi(\omega)$ be the indicator function
that the endpoints of $\omega$ are distance one apart.  Then observe that
$\langle 1\rangle^p_{2n} = \langle \Phi \rangle_{2n-1}$ and $\langle 1 \rangle^{p+}_{2n} = \langle\Phi\rangle^+_{2n-1}$.
\begin{pred}\label{pred:polygon_asymp}
There exist $A, A'$ such that
\[\langle \Phi \rangle_{2n-1} = \langle 1\rangle^p_{2n}\sim  A \, n^{-3/2}, \qquad \langle \Phi \rangle_{2n-1}^+ = \langle 1\rangle^{p+}_{2n} \sim A' \, n^{-5/2}.\]
\end{pred}

For the first relation, we note that the measure of
SAWs of $2n$ steps whose middle vertex is the origin
is comparable to $n^{11/32}$. The endpoints
of these walks are distance $n^{3/4}$ away and hence 
there are of order $n^{3/2}$ possibilities for 
each of the endpoints.  We therefore get a factor
of $n^{-3/2}$ which represents the probability that
 the endpoints of
the $2n$-step walk agree.  We also require the absence of any intersection between the first and second halves of
the walk near the initial/terminal point.  This gives
another factor that is comparable to $n^{-11/32}.$
Therefore, the measure should be comparable to
\begin{equation}  \label{dec23.1}
  n^{11/32} \, n^{-3/2} \, n^{-11/32} = n^{-3/2}.
  \end{equation}
One may note that the exponent $\gamma $ cancels in
this calculation --- one did not need to know its value.
This shows that SAPs can be easier to analyze than
SAWs since there are no endpoints.  
We will use the basic principle of \eqref{dec23.1}
several times below, so we state it.  This idea
extends to other dimensions, so we state it a little
more generally.
\begin{itemize}
\item  The probability that the endpoints of a SAW
of length $2n-1$ are distance one apart is comparable
to $p_{2n} \, n^{-d \nu}$, where $p_{2n}$ is as defined in~\eqref{eqn:pn_defn}.  In particular, if $d=2$
it is comparable to $ n^{-11/32} \,n^{-3/2} =  n^{-59/32}.$
\end{itemize}

For the second
relation in the prediction, we start with a SAP and consider the various
loops one gets by translating the root around.  
Typically there are only $O(1)$ such vertices with
minimum imaginary part, and hence the probability
that a SAP of length $2n$ is a
  half-plane SAP is comparable to $n^{-1}$.

Now let $q_{n,h}^+ $ denote the measure of the set
of SAWs of length $n$ that stay in the
upper half-plane and  that start at $ih$.  If $h=0$,
then $q_{n,0}^+ = q_n^+$ and if $h \geq n^{3/4}$, then
$q_{n,h}^+ \asymp q_n$.  We will consider the case
$0 < h \ll n^{3/4}$, and define $K$ by $n = K
 \, h^{4/3}$.  Then a SAW of $n$ steps can
 be viewed on a ``mesoscopic'' scale as  a SAW of
 $K$ steps where each of the steps is a little SAW of
 $h^{4/3}$ steps. Using this perspective, we see that
 we would conjecture that the probability
 that a SAW of $K h^{4/3}$ steps starting at $ih$
  stays
 in the half-plane is comparable to $K^{-25/64}$.
 Hence we conjecture that $q_{n,h}^+$ is comparable
 to $q_n \, K^{-25/64}$ which is comparable to
 $q_n^+ \, h^{25/48}$.  To phrase this in our
 notation, let $\hat S_h(\omega)$ be the indicator
 function that $\Im(\omega_j) \geq -h$ for all $j$.
 
\begin{pred}\label{pred:above-h}
There exists $A$ such that if $1 \ll h \ll n^{3/4}$, then
\[\langle \hat S_h \rangle_n \sim A \, n^{-3/64} \, h^{25/48}.\]
\end{pred}

If $z \in \Half$, let $q_n(z), q_n^+(z)$ denote the measure
of SAWs and half-plane SAWS, respectively, of length $n$
starting at the origin ending at $z$.  If $z = x + iy$
with $x \asymp n^{3/4}, y \asymp n^{3/4}$, then we expect
that $q_n(z) \asymp n^{-3/2}\, q_n,
 q_n^+(z) \asymp n^{-3/2}\, q_n^+$.  Now suppose $x 
 \asymp n^{3/4}$ and $1 \ll y \ll n^{3/4}$.  Then
 we view a SAW from $0$ to $y$ as the concatentation
 of two walks of length $n/2$ --- a half-plane walk
 starting at the origin and (the reversal of) a
 SAW starting at $x+iy$ that stays in the half-plane.
 We therefore guess that the measure should be comparable to
 $n^{-3/64} \, [n^{-3/64} \, y^{25/48}] \,
       n^{-59/32} = n^{-31/16} \, y^{25/48}.$ 
       Here the $n^{-59/32} = n^{-3/2} \, n^{-11/32} $
       as in the principle above.
  For a fixed $y$, there are of order $n^{3/4}$ possible
  values of $x$. By summing over these we get the following
  prediction.
       
\begin{pred}\label{pred:halfplane_fixedheight}
There exists $A$ such that if $1 \ll h \ll n^{3/4}$, then
\[\langle \mathbbm 1_{\{y(\omega) = h\}} \rangle^+ \sim A \, n^{-19/16} \, h^{25/48}.\]
\end{pred}

\subsection{SAWs restricted to a strip}\label{ssec:saw_strip}

The predictions in the last subsubsection
were made  assuming that the  the maximal
imaginary displacement  of the bridge
or half-plane walk  was  typical  for the number
of steps.  We now consider the case where this maximal height is
much smaller. We will fix a height $h$ and consider
SAWs restricted to the infinite strips
\[ \strip_h = \{x +iy \in \Z \times i \Z: 0 \leq y \leq h \},\]
\[  \strip_h^* = \strip_h - \frac{h}{2} \, i
 = \left\{x +iy \in \Z \times i \Z: - \frac h 2 \leq y \leq \frac{h}{2} \right\}
 .\]
We will write $S_h,S^*_h$ for the indicator functions
that $\omega \in \strip_h,\strip_h^*$, respectively.
We will consider SAWs in $\strip_h$ of length
$n = K \,h^{4/3}$ where $K \gg 1$.  We write
any such SAW $\omega$ as a concatenation of $K$ SAWs of length $h^{4/3}$:
\begin{equation}  \label{dec20.2}
 \omega =  \omega_1 \oplus \cdots \oplus \omega_K. 
 \end{equation}
 Let us choose a lattice space of $1/h$ and write the scaled
 walks as
 \[   \omega^h = \omega_1^h \oplus \cdots \oplus \omega_K^h. \]
These walks live in the strip
\[   D = \{x+iy \in \C: 0  \leq  y  \leq 1 \}. \]
The walk $\omega^h$ can be viewed as a polymer consisting
of $K$ monomers $\omega_j^h$ each of length of order $1$.
On this scale $\omega$ is a {\em one}-dimensional (in some
sense, weakly) self-avoiding walk of $K$ steps.  Hence, we would
expect the measure of such walks to decay like $e^{-\beta_1 K}$
for some $\beta_1$.  Moreover, we expect no smaller order corrections
in $K$ (that is, the measure is asymptotic to $A_1 \, e^{-\beta_1K}$
for some $A_1$)  and for the typical walk to look like a straight line
of length $cK$. This is the basis for the following predictions.  We
will assume that $h \rightarrow \infty$ and $K_h \rightarrow \infty$
and write $n = K_h \, h^{4/3}$.

\begin{pred}\label{pred:strip_walks}
There exist $\beta_1$ and $A_1,A_2,A_3,A_4$ such that the following holds.
\[   \langle S_h^* \rangle_n  \sim A_1 \, \exp \left\{-\beta_1 n \, h^{-4/3}\right\} \,  h^{11/24},\]
\[    \langle S_h \rangle_n^+  \sim  A_2 \,  \exp \left\{-\beta_1 n \, h^{-4/3}\right\}  \,  h^{-1/16} , \]
\[  \langle \mathbbm 1_{\{h(\omega) = h\}} \rangle_n^+  =  \langle S_h \rangle_n^+  - \langle S_{h-1} \rangle_n^+  \sim  A_3\,  \exp \left\{-\beta_1 n \, h^{-4/3}\right\}  \, [ n \, h^{-7/3} ] \, h^{-1/16},\]
\[ \langle \mathbbm 1_{\{y(\omega) = h)\}} \rangle_n^b \sim  A_4 \, \exp \left\{-\beta_1 n \, h^{-4/3}\right\}  \,  h^{-19/12} .\]
\end{pred}

The first two relations follow from Predictions~\ref{pred:SAWs_asymp} and~\ref{pred:halfplane_asymp} respectively, where the factor of $e^{-\beta_1 K}$ accounts for the height restriction and we have used $n=K h^{4/3}$. The third is merely the $h$-derivative of the second. The fourth follows from Prediction~\ref{pred:bridge_asymp}.
We note that  $h^{-19/12} = h^{-7/12} \, h^{-1}$.
The extra factor of $h^{-1}$ comes from the fact that we are
specifying the exact height. 
Using \eqref{integralest}, we see that  as $n \to \infty$,
\begin{eqnarray*}
\sum_{h=1}^\infty   e^{-hu} \, 
\exp
\left\{-\beta_1 n \, h^{-4/3}\right\} &\sim & 
 \int_0^\infty   
\exp
\left\{-\left(\frac{\beta_1 n}{ x^{4/3}}
  + ux \right)\right\} \, dx\\
  & \sim & A \, u^{-5/7}  \,n^{3/14} \, \exp \left\{-\lambda_1
  \, u^{4/7}\, 
    n^{3/7} \right\},\ 
\end{eqnarray*}
where
\[    \lambda_1 =7 \cdot 3^{-3/7}\cdot4^{-4/7}\cdot\beta_1^{3/7}, \]
and $A $ does not depend on $n$ or $u$. 
 Similarly, using \eqref{integralest},
  we see that there exists $c_1,c_2$ (independent
  of $u$)  such that with the same value of $\lambda_1$,
 \begin{eqnarray*}
 \langle e^{-uh(\omega)} \rangle_n^+ & = & \sum_{h=1}^\infty
         e^{-uh} \, \langle \mathbbm 1_{\{h(\omega) = h)\}} \rangle^+_n\\
      & \sim & \sum_{h=1}^\infty e^{-uh}
       \, A_3\,  \exp
\left\{-\beta_1 n \, h^{-4/3}\right\}  \, [ n \, h^{-7/3}  
  ]\, h^{-1/16}\\
  & \sim & A_3 n \int_0^\infty   x^{-115/48} \, \exp\left\{-\left(
  \frac{\beta_1 n}{ x^{4/3}} +ux\right)\right\} \, dx \\
  & \sim & c_1 \, n^{3/16} \, u^{5/16} \, 
   \exp \left\{-\lambda_1
  \, u^{4/7}\, 
    n^{3/7} \right\},
\end{eqnarray*}
\begin{eqnarray*}
 \langle e^{-uh(\omega)} \rangle_n^b & = & \sum_{h=1}^\infty
         e^{-uh} \, \langle \mathbbm 1_{\{h(\omega) = h)\}} \rangle^b_n\\
      & \sim & \sum_{h=1}^\infty e^{-uh}
       \, A_4\,  \exp
\left\{-\beta_1 n \, h^{-4/3}\right\}  \, h^{-19/12}\\
& = & A_4\int_0^\infty x^{-19/12} \, 
 \exp\left\{-\left(
  \frac{\beta_1 n}{ x^{4/3}} +ux\right)\right\} \, dx  \\
     & =  & c_2 \, n^{-13/28} \,u^{-1/28}
     \, \exp\left\{-\lambda_1 u^{4/7}n^{3/7} \right\}
    .
\end{eqnarray*}
We have thus obtained the results of Prediction~\ref{pred:main}. 

\subsection{Polygons restricted to a strip}\label{ssec:polygons_strip}

We finish this section with some predictions for self-avoiding
polygons in a strip.

\begin{pred}\label{pred:strip_polygons}
There exist $\beta_2$ and $A_5,A_6$ such that the following holds.
\[   \langle S_h^* \, \Phi \rangle_{n-1} = \langle S_h^*\rangle^p_n \sim  A_5 \, \exp\left\{-\beta_2 \,n \,h^{-4/3}\right\} \,  h^{-17/6}, \]
\[   \langle  S_h \, \Phi \rangle_{n-1} = \langle S_h\rangle^p_n \sim  A_6 \, \exp\left\{-\beta_2 \,n \,h^{-4/3}\right\} \,  h^{-25/6}.\]
\end{pred}

We will take a slightly different approach here.  For
any $0 \leq y,\tilde y \leq h$ and integer $N >0$, let
${\mathcal K}_h(y,\tilde y,N;r)$ denote the set of SAPs
$\omega = [\omega_0,\ldots,\omega_{2n-1},\omega_{2n}
= \omega_0]$
with the following properties:
\[    \omega_0 = i y, \;\;\;\; \omega_{2n-1} = i(y-1) \]
\[        \omega_{r} = N + i \tilde y , \;\;\;\;
   \omega_{r+1} = N+i(\tilde y-1) , ,\]
   \[  1 \leq \Re[\omega_j] \leq N-1 ,\;\;\;
     j \neq 0,r,r+1,2n-1,2n.\]
     \[  \omega \subset \strip_h , \]
 and let
 \[   {\mathcal K}_h(y,\tilde y,N) = \bigcup_r {\mathcal K}(y,\tilde y,N;
 r), \;\;\;\; K_{h,N} = \bigcup_{y,\tilde y} {\mathcal K}(y,\tilde y,N) .\]
In other words,
$ {\mathcal K}_{h,N}$ is the set of polygons in $\strip_h$ whose
minimal real value is $0$; maximal real value is $N$; 
that have only one bond on each of 
the vertical lines $I_0 := \{\Re(z) = 0\}$
and $I_N:=\{\Re(z) = N\}$;  and are translated and
oriented  so that they start at the 
``higher'' point on $I_0$ and that the final
bond is the unique bond on  $I_0$.  Let $\tilde  {\mathcal K}_{h,N}$
denote the set of polygons in $\strip_h$ whose
minimal real value is $0$; maximal real value is $N$; 
and such that the initial vertex is on $I_0$.  Let $J_{h,N},
\tilde J_{h,N}$ denote the corresponding indicator functions.
In the set $\tilde  {\mathcal K}_{h,N}$, the SAP can have multiple
bonds on the extremal vertical lines $I_0,I_N$; however, we
predict that typically there are only a few such bonds.
In particular, we predict that 
 there exists $A$ such that as $h,N \rightarrow \infty$ with $N/h \rightarrow \infty$,
\[
\langle S_h^* \,\Phi\, \tilde  J_{h,N} \rangle \sim A \,
   \langle S_h^* \,\Phi\, J_{h,N} \rangle . \]

If $\omega \in  {\mathcal K}_h(y,\tilde y,N)$, then we can view $\omega$
as two SAWs: $\omega_1$ from $iy$ to $N + i \tilde y$ and
$\omega_2$ from $i(y-1)$ to $N+ i (\tilde y - 1)$.  The
polygon $\omega$ is obtained by concatenating $\omega_1$
and (the reversal of) $\omega_2$ adding the two extra
bonds to make this a polygon.  Any $\omega_1,\omega_2$
can be chosen provided that $\omega_1 \cap \omega_2
 = \emptyset$.  So we need to find the $Q$ measure of
 the set of pairs of such walks $(\omega_1,\omega_2)$ with
 $\omega_1 \cap \omega_2 = \emptyset$. The limiting measure
 is predicted to be the restriction measure with exponent
 $2$.  To be precise it is predicted~\cite{LSuniversality, LSWchordal}
that the measure is asymptotic to 
 \[         A \, h^{-4} \,  H_D(i(y/h), K + i(\tilde y/h))^{2} \]
 where $H_D$ denotes the boundary Poisson kernel (derivative
 of the Green's function), $K = N/h$ and
 \[   D = D_K = \{x+iy:  0 < x < K, 0 < y < 1 \} . \]
 The factor $h^{-4}$ should be viewed as a factor of $h^{-2}$
 for each boundary end of the SAP where $2$ is the exponent
 of the restriction measure.
 Using a conformal transformation we see that
 \[  H_D(i(y/h), K + i(\tilde y/h)) 
  \sim c \, \sin(y \pi/h) \, \sin(\tilde y \pi /h)  \,
         e^{-K\pi}.\]
   If we sum over the
 number of possible values of $(y,\tilde y)$, we
see that we get the prediction
   \[    Q({\mathcal K}_{h,N})  \asymp
   Q (\tilde {\mathcal K}_{h,N}) 
   \asymp 
  h^{-2} \, 
                     e^{-2K\pi } . \]
                      
 Now consider the set ${  \hat  {\mathcal K}_{h,N}}
 $ of self-avoiding polygons in $\mathcal S^{*}_h$ starting at 
  the origin such that the real displacement equals $N
   { = K \, h  }$.
 To each such polygon, we can find a polygon that visits the
 same points in the same order that starts at a point
 of minimal real part.  Conversely, if we have a walk in $
 {  \hat {\mathcal K}_{h,N}}$
 and choose a vertex in the middle, we can translate the walk 
 so that  vertex is the starting point.  Since the typical
 such polygon has length comparable to $K \, h^{4/3}$,
 there are  $O(Kh^{4/3})$
 choices for the vertex.  We therefore get the conjecture
 \[    Q({   \hat  {\mathcal K}_{h,N}}) \sim A \,K \, h^{4/3} \,
  h^{-2} \, 
                     e^{-2K \pi}  = A \, K \, h^{-2/3} \, 
                      e^{-2K \pi} . \]
                     
 This gives a prediction for a fixed real displacement, but we
 must convert it to a fixed number of steps.  Consider 
 a walk in $ \hat  {\mathcal K}_{h,N}  $ chosen from the
  probability measure given by $Q$ restricted to
  $ \hat  {\mathcal K}_{h,N}$, normalized to be a probability
  measure.  Let $T$ denote the number of steps of such
  a walk.  $T$ is the sum of $K$ random variables, each representing
  the number of steps in one of the $K$ squares of side length 
  $h$.  These random variables are roughly identically distributed
  and have short range correlations, so as an approximation we view
  $T$ as having the behavior of
  \[      h^{4/3} \, [X_1 + \cdots + X_K] , \]
  where $X_j$ are independent, identically distributed.  We write
  $\alpha$ for the mean.  Hence, we predict that the expectation of
  $T$ is $\alpha h^{4/3} K$ and the standard deviation of $T$
  is comparable to $h^{4/3} \, K^{1/2}$. 
   Even more precisely, standard arguments from large deviation theory lead us to expect that (at least for small $r$)
   there is a rate function $\rho(r)$ with $\rho(\alpha)
    = 0$  such that  the 
   probability that $T  =2n$ where $ n
    = h^{4/3} \,r \, K = h^{1/3} \, r
     \, N$ is  comparable to
    \[            \left[h^{ 4/3} \, K^{1/2}\right]^{-1}
       \, \exp\{- \rho(r) \, K\}. \]
    Hence the $Q$ measure of polygons in  $ \hat  {\mathcal K}_{h,N}  $  of $2n$ steps is comparable to
\[\left[h^{ 4/3} \, K^{1/2}\right]^{-1} \, \exp\{- \rho(r) \, K\} \, K \, h^{-2/3} \, e^{- 2 \pi K }
       \asymp  N^{1/2}\, h^{-5/2} \, \exp \left\{-\left[\rho\left(\frac{n}{h^{1/3}
        \, N} \right)  
        + 2 \pi\right] N \, h^{-1}
        \right\},
       \]
       and
\[ 
  \langle S_h^* \, \Phi \rangle_n  \asymp  
     h^{-5/2} \, \sum_{N=1}^\infty 
         N^{1/2}\,  \exp \left\{-\left[\rho\left(\frac{n}{h^{1/3}
        \, N}  \right)  
        + 2 \pi\right] N \, h^{-1}
        \right\}.\]
 As before, to analyze this sum we start by finding the $N$
 that minimizes
 \[   f(N) = \left[\rho\left(\frac{n}{h^{1/3}
        \, N} \right)  
        + 2 \pi\right] N = \left[\rho(r) + 2 \pi
      \right] N  .\]
  Since \[f'(N) = 
       \rho\left(\frac{n}{h^{1/3}
        \, N}\right)  + 2 \pi - \frac{n}{h^{1/3} N } \rho'
        \left(\frac{n}{h^{1/3}
        \, N} \right) = \rho(r) + 2 \pi
         - r \, \rho'(r) , \]
 we find $r_0$ with
 \[       r_0 \, \rho'(r_0) - \rho(r_0) = 2 \pi.\]
We assume that $\rho$ is smooth and we note that
it has a global minimum at $r = \alpha$ where
$\rho(\alpha) = \rho'(\alpha) = 0$.  Therefore,
we expect $r_0 > \alpha$, and if we set $\lambda =
\rho(r_0)$, 
\[      \rho(r_0 + \epsilon) = \lambda +
 \rho'(r_0) \, \epsilon +O(\epsilon^2) =
     \lambda  + \frac{2 \pi + \lambda}{r_0} \, \epsilon
    + O(\epsilon^2).\]
The terms in the sum are maximized at $N_n$ where
  $N_n h^{1/3}r_0  = n  $.
Moreover, for $k \ll N_n$,
\begin{align*}
  f(N_n + k) & =  \left[\rho\left(\frac{n}{h^{1/3}
        \, (N_n+k)} \right)  
        + 2 \pi\right]\, (N_n+k) \\
         &  =  \left[\rho\left(r_0 
         -r_0 \, \frac{k}{N_n} + 
       O\left(\frac{k^2}{ N_n^2} \right)\right)
        + 2 \pi\right]  \, (N_n+k) \\
    & = \left[\lambda  - \frac{(\lambda + 2\pi)k}{N_n}
        +  O\left(\frac{k^2}{ N_n^2} \right) + 2 \pi\right] \, (N_n+k)\\
        & = f(N_n) + O\left(\frac{k^2}{N_n}\right) . 
        \end{align*}
 The key fact about this calculation is that $f(N_n+k)$ is
 comparable to $f(N_n)$ for $k^2 \leq N_n$, that is,
  for $O(N_n^{1/2})$ values of $k$ and
 hence
 \[ \sum_{N=1}^\infty 
         N^{1/2}\,  \exp \left\{-\left[\rho\left(\frac{n}{h^{1/3}
        \, N}  \right)  
        + 2 \pi\right]\, N \, h^{-1}
        \right \} \asymp N_n \,  \exp \left\{-\beta_2 nh^{-4/3}
       \right\}
      \asymp \left(n/h^{1/3}\right)  \, \exp \left\{-\beta_2 n h^{-4/3}
       \right\},   \]
       and
 \[        \langle S_h^* \, \Phi \rangle_n  \asymp n \, h^{-17/6}
  \,  \exp \left\{-\beta_2 nh^{-4/3}
       \right\} \]
for some $\beta_2$ that we cannot determine explicitly.
 
   For the second part of Prediction~\ref{pred:strip_polygons}, we start by considering a polygon with
  $S_h^* \, \Phi = 1$ as two parts, $\omega^1$ and $\omega^2$.  There is a 
  ``lower'' SAW $\omega^1$
   of
  $2 h^{4/3}$ steps whose middle vertex is the origin, and then  
  $\omega^2$ is an ``upper'' $n - 2h^{4/3}$
  step SAW from one endpoint of $\omega^1$ to the other that does
  not intersect $\omega^1$.  Using the reasoning for polygons
  from before, we predict that the probability that $\omega^1$
  lies entirely in the upper half-plane is comparable to $h^{-4/3}$.
  Hence, we would expect
  \[   \langle S_h \, \Phi \rangle_n  \asymp h^{-4/3}
   \, \langle S_h^* \, \Phi \rangle_n .\]
   
Finally, we can associate the weight $e^{-uh}$ with polygons of height $h$, as we did previously with walks and bridges. To do so we must first examine polygons of height exactly $h$, rather than height at most $h$.

\begin{pred}
There exist constants $A_7$ and $\beta_2$ (taking the same value as in Prediction~\ref{pred:strip_polygons}) such that
\begin{equation}\label{eqn:polygons_height_exactly_h}
\langle S_h\,\Phi\,\mathbbm 1_{\{h(\omega)=h\}}\rangle_n = \langle \mathbbm 1_{\{h(\omega)=h\}}\rangle^{p+}_n = \langle S_h\rangle^p_n - \langle S_{h-1}\rangle^p_n \sim A_7\,n\,h^{-13/2}\exp\left\{-\beta_2\,n\,h^{-4/3}\right\}.
\end{equation}
\end{pred}
\noindent This is obtained by taking the $h$-derivative of the second expression in Prediction~\ref{pred:strip_polygons}.

The asymptotics of $\langle e^{-u h(\omega)}\rangle^{p+}_n = \Pmax_n(u)$ as given in Prediction~\ref{pred:main} are then obtained by multiplying~\eqref{eqn:polygons_height_exactly_h} by $e^{-uh}$, summing over $h$, and using~\eqref{integralest}.

\section{Numerical analysis}\label{sec:analysis}

We now wish to empirically test the validity of Prediction~\ref{pred:main}. That is, we wish to investigate the behavior of $C_n^\text{max}(u)$ and $B_n(u)$ for a variety of values of $u$, by generating and analyzing the sequences up to as large a value of $n$ as possible.

In full generality then, we will be testing the hypothesis that two one-parameterized
sequences of positive
numbers $r_n(u), \,\, s_n(u)$, depending on $u,$ have an asymptotic form
\[     r_n(u) \sim A_u \, \exp\left\{-\lambda u^{4/7}\, n^{3/7} \right\}
 \, n^g , \]
 \[  s_n(u) \sim A_u' \, \exp\left\{-\lambda u^{4/7}\, n^{3/7} \right\}
 \, n^{g'} , \]
respectively, where $\lambda, A_u, A_u',g,g'$ are unknown and $\lambda,g,g'$ are independent
 of $u$. (Naturally, we will wish to use $r_n(u) = C^\text{max}_n(u)$ and $s_n(u) = B_n(u)$.) This can also be written as
\[   \log r_n(u) = \log A_u - \lambda \, u^{4/7} \, n^{3/7}
  \, + g \, \log n +o(1) .\]
  We will make the stronger assumptions that the $o(1)$ is actually
  $O(n^{-4/7})$ and that a derivative form holds,
  \[  \log r_{n+2}(u) - \log r_n(u)
     = \left(\frac{6\lambda \, u^{4/7}}{7 n^{4/7}} 
      + \frac{2g}{n}\right) \, [1 + O(n^{-1})],\]
\[\frac{r_n(u)}{s_n(u)} = const\cdot n^{g-g'} + O(n^{-11/7}).\]
One starts by estimating $\lambda$ using the values of $\log r_n(u)$
for different values of $u$ and finding the slope. 

\subsection{Generation of data}

Let $b_{n,h}$ be the number of bridges of length $n$ spanning a strip of width $h.$ The generating function is 
\[B(z,u)=\sum_{n,h} b_{n,h} z^n e^{-uh} = \sum_n B_n(u) z^n e^{\beta n},\]
where $B_n(u)$ is as defined in~\eqref{eqn:qoi_bridges}. We have generated all coefficients in $B$ with $n \le 86$ on the square lattice. Thus $h \le 86$ also. 

If $c^\text{max}_{n,h}$ is the number of upper half-plane SAWs of length $n,$ originating at the origin, with maximum height $h$, the corresponding generating function is 
\[C^\text{max}(z,u)=\sum_{n,h} c^\text{max}_{n,h} z^n e^{-uh} = \sum_n C^\text{max}_n(u) z^n e^{\beta n},\]
where $C^\text{max}_n(u)$ is as defined in~\eqref{eqn:qoi_walks}. We have generated all coefficients in $C^\text{max}$ with $n \le 60$ on the square lattice.

The algorithm we use to enumerate SAWs and bridges on the square lattice builds on the 
pioneering work of Enting \cite{Enting80}  on  
self-avoiding polygons extended to walks   by Conway, Enting and 
Guttmann \cite{Conway93} with further enhancements over the years
by Jensen and others \cite{Jensen04,Clisby2012New,Jensen13}.    Below we shall only briefly
outline the basics of the algorithm and describe the changes made
for the particular problem studied in this work.

The first terms in the series for the SAW generating function can be calculated 
using transfer matrix techniques to count the number of SAWs in rectangles of
width $w$ and length $l$ vertices long. Any SAW spanning such a rectangle has length 
at least $w+l-2$. By adding the contributions from all rectangles of width 
$w \leq n+1$  and length $w \leq l \leq n-w+1$ the number of SAW  
is obtained correctly up to length $n$.

The generating function for rectangles with fixed width $h$ are calculated using 
transfer matrix (TM) techniques. The most efficient implementation of the TM algorithm 
generally involves  cutting the finite lattice with a line 
and moving this cut-line in such a way as to build up 
the lattice vertex by vertex.  Formally a SAW can be viewed as  a sub-graph of the square lattice
such that every vertex of the sub-graph has degree 0 or 2 (the vertex is empty or the SAW
passes through the vertex) excepts for two vertices of degree 1 (the start- or end-point of the SAW).
Thus if we draw a SAW and  then cut it by a line we observe that the partial SAW to the left and right of this 
line consists of a number of arcs connecting two edges on the intersection, 
and pieces which are connected to only one edge (we call these free ends). 
The other end of a free piece is  either the start- or  end-point of the SAW so there are 
at most two free ends. 

Each end of an arc is assigned one of two labels 
depending on whether it is the lower end or the upper end of an arc. Each 
configuration along the cut-line  can thus be represented by a set of 
edge states $\{\sigma_i\}$, where

\begin{equation}\label{eq:states}
\sigma_i  = \left\{ \begin{array}{rl}
0 &\;\;\; \mbox{empty edge},  \\ 
1 &\;\;\; \mbox{lower arc-end}, \\
2 &\;\;\; \mbox{upper arc-end}. \\
3 &\;\;\; \mbox{free end}. \\
\end{array} \right.
\end{equation}
\noindent
Since crossings aren't permitted this encoding uniquely describes 
which loop ends are connected.

The sum over all contributing graphs is calculated as the cut-line 
is moved through the lattice.  For each configuration of occupied or empty edges 
along the intersection we maintain a generating function $G_S(z)$ for partial walks 
with signature $S$, where $G_S(z)$ is a polynomial truncated at degree $n$.  
In a TM update each source  signature $S$ (before 
the cut-line is moved) gives rise to a  few new target signatures  $S'$ 
(after the move of the boundary line) and $n=0, 1$ or 2 new edges are inserted 
leading to the update  $G_{S'}(z)=G_{S'}(z)+z^nG_S(z)$. Once a signature $S$ 
has been processed it can be discarded.

Some changes to the algorithm described in \cite{Jensen04,Jensen13} are required in order to
enumerate the restricted SAWs and bridges. We used the recently developed version of the TM 
algorithm \cite{Clisby2012New,Jensen13} in which edge states describe how the set of occupied 
edges along the cut-line are to be connected to the right of the cut-line. i.e, how edges must 
connect as the cut-line is moved in the transfer direction from left to right.

\subsubsection{Further details for SAWs}

Grafting the SAW to the wall can be achieved by forcing one of the free ends  (the start-point)  to lie
on the bottom side of the rectangle. In enumerations of unrestricted SAWs one can use 
symmetry to restrict the TM calculations to rectangles with  $w\leq n/2+1$ and $l\geq w$ 
by counting contributions for rectangles with $l>w$ twice. 
The grafting of the start-point to the wall breaks the symmetry and we have to consider
all rectangles with $w\leq n+1$. The number of configurations one need consider
grows exponentially with $w$. Hence one wants to minimise the length of the cut-line. To achieve
this the TM calculation on the set of rectangles is broken into two sub-sets with
$l\geq w$ and $l<w$, respectively.  In the calculations for the  sub-set with $l<w$ the  
cut-line is chosen to be horizontal (rather than vertical) so it cuts across at most  $l+1$ edges.
Alternatively, one may view the calculation for the second sub-set as a TM algorithm for
SAWs with its start-point on the left-most border of the rectangle.

In order to measure the maximum height of the SAW all we need to do is extract this information
from the finite-lattice data set. From the calculation with  $l\geq w$, where one end-point must lie 
on the bottom of the rectangle, the maximum height is simply $w$.  Similarly for $l<w$, where one end-point 
must lie  on the left of the rectangle,  the maximum height is $l$. The final series is then obtained by combining
the results from the two cases summing over all possible sizes of the rectangles.

\subsubsection{Further details for bridges}

As for SAWs the calculation for  bridges is broken into two sub-sets corresponding to
bridges which span the rectangle from bottom-to-top when $l\geq w=h$ or from
left-to-right when $h=l<w$. In this case there are some further restrictions on the permissible 
types of configurations which makes the algorithm a little  more efficient. The most important
is that at all stages during the TM calculation a free end must have access to the 
boundaries of the rectangle otherwise the partially constructed SAW would
be unable to span the rectangle and hence could not lead to a bridge. So any configuration
where a free end is embedded inside arcs is forbidden.

Clearly concatenating two bridges of height $i$ and $j$ gives a bridge of height 
$i+j$ (we place the origin of the second walk on top of the end-point of the first walk). This means
that any bridge can be decomposed into {\em irreducible bridges}, i.e.,
bridges which cannot be decomposed further, and we use $a_n$ to
denote the number of $n$-step irreducible bridges. So 
the generating function $B(z,u)$ for bridges is simply related 
to the generating function for irreducible bridges $A(z,u)$

\begin{equation}\label{eq:A2B}
B(z,u)=\frac{A(z,u)}{1-A(z,u)}.
\end{equation}

This fact can be used to extend the series for bridges by the following simple
observation. If we calculate the number of bridges in rectangles up to
some maximal width $w_{\rm max}$ the series for bridges will be correct to
order $2w_{\rm max}+1$ but if we do the same for irreducible bridges 
the series will be correct to order $3w_{\rm max}+2$.  Thankfully the
number of irreducible bridges can easily be obtained from the number of bridges.
Consider the number of bridges $b_{n,h}$ and irreducible bridges $a_{n,h}$ 
of length $n$ and height  $h$ with associated generating functions $B_h(z)$ and $A_h(z)$. 
 Since a bridge is either irreducible or the concatenation of a bridge
with an irreducible bridge we get 

$$B_h(z) = A_h(z) + \sum_{k=1}^{h-1} A_{h-k}(z)B_k(z)$$
\noindent
and thus 

$$A_h(z) = B_h(z)-\sum_{k=1}^{h-1} A_{h-k}(z)B_k(z),$$
\noindent
which allows us to obtain all generating functions $A_h(z)$ recursively from 
$B_h(z)$ for $1 \leq h \leq w_{\rm max}$ and once these are known we can extend the
series  $B(z,y)$ by~(\ref{eq:A2B}) to order $n=3w_{\rm max}+2$ for any $1<h\leq n$.

We calculated series for $B_h(z)$ up to $h=28$ and  thus $n=86$ which on its own
would give a series for bridges to order 57 but extracting the data for irreducible bridges
and using~\eqref{eq:A2B} the series to order 86 was obtained. The calculation used a total
of some 25,000 CPU hours. The most demanding part was the calculation for
$w=28$ where we used 640 processor cores  with 3GB of memory used
per core.

\subsection{Analysis of data}
The sequences we now analyze are the coefficients of $B(z,u)$ and $C^\text{max}(z,u)$; that is, $\{B_n(u)e^{\beta n}\}_n$ and $\{C^\text{max}_n(u)e^{\beta n}\}_n$. For brevity, in the remainder of this paper we will use $b_n \equiv B_n(u)e^{\beta n}$ and $c^+_n\equiv C^\text{max}_n(u)e^{\beta n}$, and will use $a_n$ as a placeholder sequence which could stand for either.

As discussed in  \cite{G14a}, if the asymptotic form of a sequence $\{a_n\}$ is
\begin{equation} \label{eq:coef}
a_n \sim const \cdot e^{\beta n} \cdot \mu_1^{n^\sigma} \cdot n^{\gamma-1},
\end{equation}
then the ratio of successive coefficients is
\begin{equation} \label{eq:rn}
r_n = \frac{a_n}{a_{n-1}}=e^{\beta} \left (1 + \frac{\sigma \log {\mu_1}}{n^{1-\sigma}} +\frac{g}{n}+\frac{\sigma^2 \log^2 \mu_1}{n^{2-2\sigma}}+ O\left (\frac{1}{n^{2-\sigma}}\right ) \right ).
\end{equation}
So a ratio plot against $n^{\sigma-1}$ should be linear, with perhaps some low-$n$ curvature induced by the presence of the $O(1/n)$ term.

To illustrate this, we take the generating function for bridges with $u=\log 2$ as a representative value, not subject to cross-over effects present near $u=0$ and $u=\infty$. Plotting the ratios of pushed bridges against $1/n,$ as shown in Figure~\ref{fig:ratiosa} gives a plot displaying considerable curvature. Note that the limiting value of $e^{\beta}$ as $n \to \infty$ is given by the top left corner of the plot. Plotting the ratios against $1/n^{2/3},$ as shown in Figure~\ref{fig:ratiosb} gives a plot with reduced, but still significant, curvature. In Figures~\ref{fig:ratiosc} and~\ref{fig:ratiosd} we show ratio plots against $1/\sqrt{n}$ and against $1/n^{4/7}$ respectively. Both are visually linear, but more careful extrapolation reveals that plot~\ref{fig:ratiosc} extrapolates, as $n \to 0,$ to a value slightly greater than the numerical value of $e^{\beta},$ while plot~\ref{fig:ratiosd} extrapolates to a value indistinguishable from the numerical value of $e^{\beta}.$ Thus this simple sequence of ratio plots gives good evidence that the ratios are linear when plotted against $1/n^\theta,$ where $0.5 \le \theta \le 0.6,$ and is totally consistent with the heuristic expectation, $\theta=4/7 = 0.57142.. .$

\begin{figure}[t!]
\centering
\begin{subfigure}[b]{0.49\textwidth}
\includegraphics[width=\textwidth]{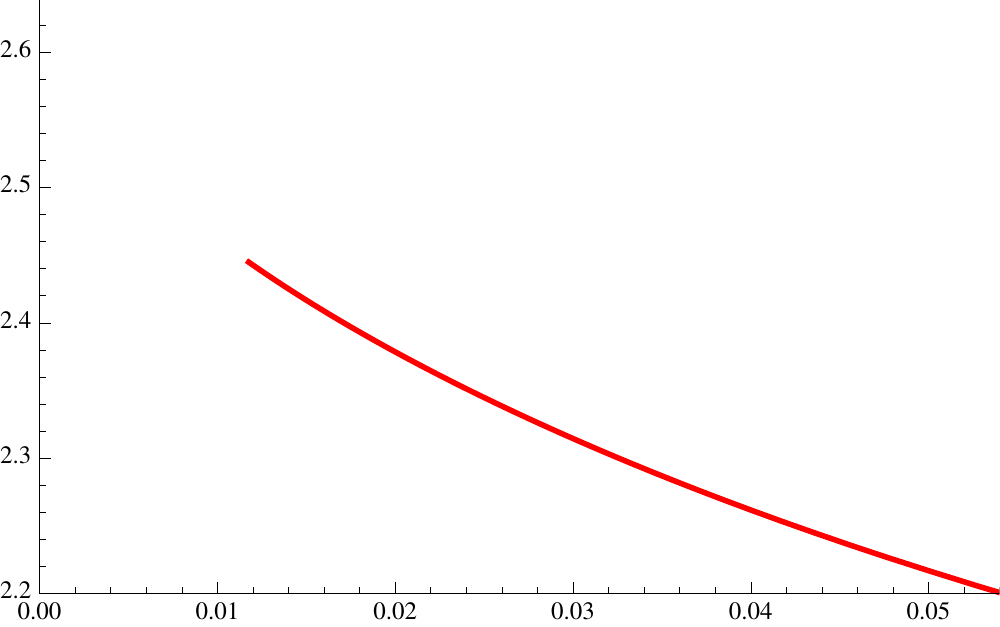}
\caption{Plot of ratios of coefficients against $\frac{1}{n}$.}
\label{fig:ratiosa}
\end{subfigure}\hfill
\begin{subfigure}[b]{0.49\textwidth}
\includegraphics[width=\textwidth]{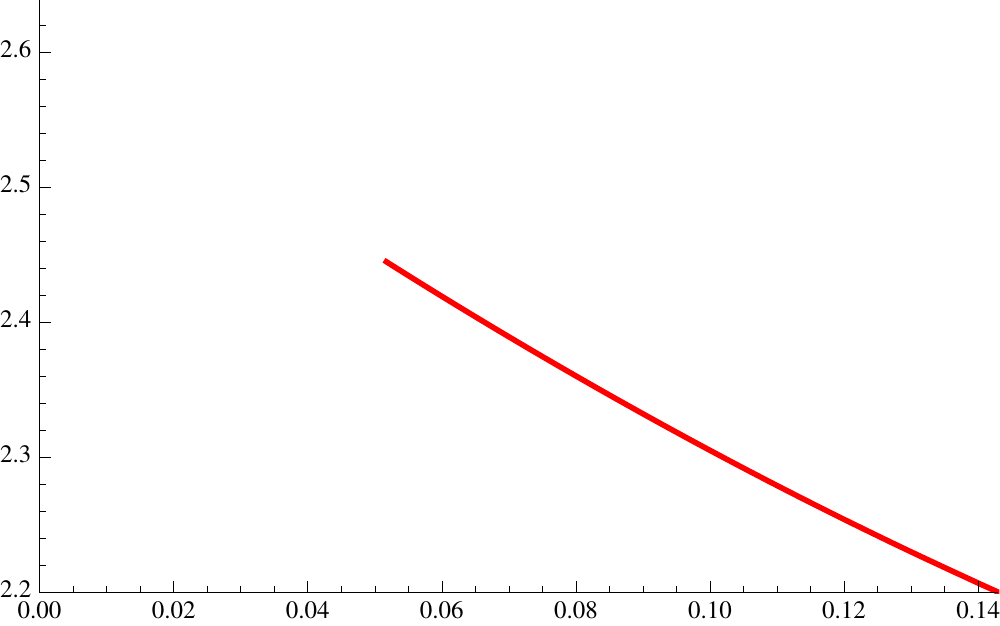}
\caption{Plot of ratios of coefficients against $\frac{1}{n^{2/3}}$.}
\label{fig:ratiosb}
\end{subfigure}

\begin{subfigure}[b]{0.49\textwidth}
\includegraphics[width=\textwidth]{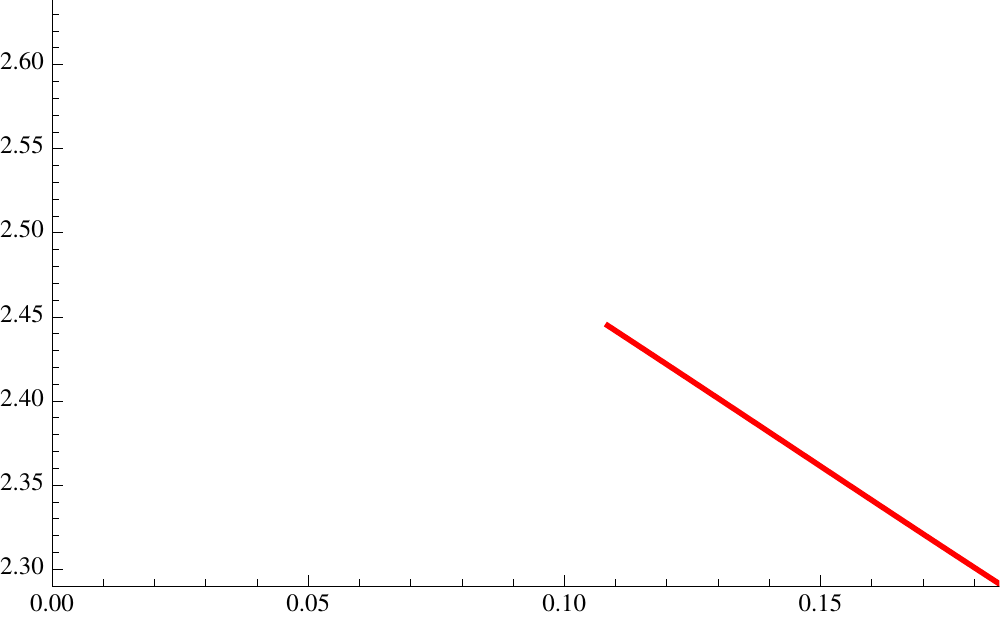}
\caption{Plot of ratios of coefficients against $\frac{1}{\sqrt{n}}$.}
\label{fig:ratiosc}
\end{subfigure}\hfill
\begin{subfigure}[b]{0.49\textwidth}
\includegraphics[width=\textwidth]{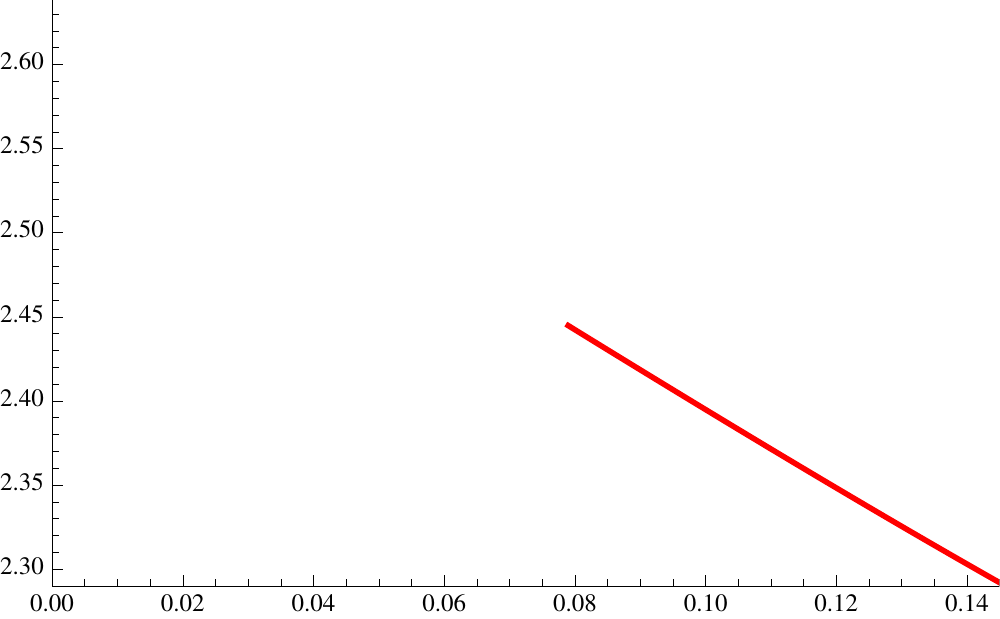}
\caption{Plot of ratios of coefficients against $\frac{1}{n^{4/7}}$.}
\label{fig:ratiosd}
\end{subfigure}
\caption{Bridge series ratios, $u=\log 2$. The top left-hand corner has ordinate $e^{\beta}$.}
\label{fig:ratios}
\end{figure}

The situation is similar for pushed SAWs (figures not shown), except that for SAWs the possible range of values for $\theta,$ as chosen by visual linearity of the ratio plots, is even wider, and we estimate $0.6 \le \theta \le 0.3,$ again consistent with the same expected value, $\theta=4/7 = 0.57142.. .$

In order to more directly  estimate the value of the exponent $\sigma$, we note from (\ref{eq:rn}) that $$(r_n\cdot e^{-\beta}-1) \sim const \cdot  n^{\sigma-1}.$$ The   plot of $\log(1-r_n \cdot e^{-\beta})$ against $\log {n}$ should be linear, with gradient $\sigma-1.$ To the naked eye it is, so we don't show it. However there is a small degree of curvature, so we extrapolate the local gradients, defined as
\BE \label{eq:locrat}
1-\sigma_n = \frac{\log \left ( 1 -{r_{n} \cdot e^{-\beta}} \right )-\log \left ( 1 -{r_{n-1}\cdot  e^{-\beta}} \right )}{\log {(n-1)} -\log(n)},
\EE
against $1/n.$ The results are shown in Figure~\ref{fig:sigmaa}.
The ordinates are estimators of $1-\sigma$, and it is clear that they are consistent with the conjectured $\sigma=3/7$.

A second estimator of $\sigma$ can be constructed from (\ref{eq:coef}) by noting that 
$$\log\left|\log \left(a_n e^{-\beta n}\right)\right| = \sigma \log{n} \left [1 + O\left ( \frac{\log{n}}{n^\sigma} \right ) \right ]. $$ As with the previous estimator, the   log-log plot of $\log \left(a_n e^{-\beta n}\right)$ against ${n}$ should be linear, with gradient $\sigma.$ We perform this calculation using bridges with $u=\log 2$. To the naked eye the plot is linear, so we don't show it. As in the previous case, there is a small degree of curvature, so we again extrapolate the local ratios, as defined above {\em mutatis mutandis}, and the resulting extrapolants are shown plotted against $1/n$ in Figure~\ref{fig:sigmab}. This is also entirely consistent with the expected value $3/7.$

\begin{figure}[t!]
\centering
\begin{subfigure}[b]{0.49\textwidth}
\includegraphics[width=\textwidth]{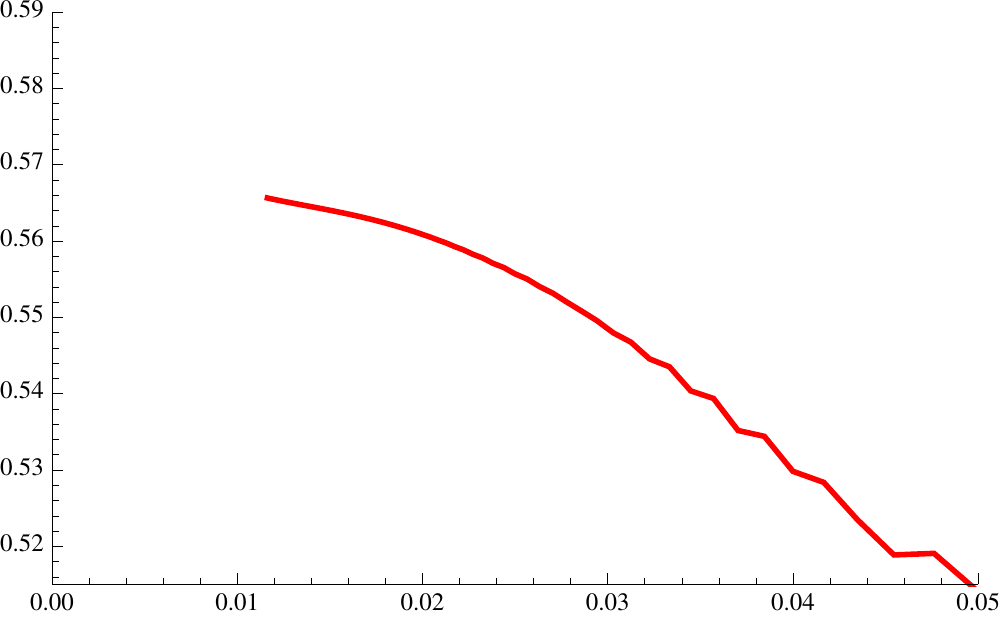}
\caption{Plot of estimates of $1-\sigma$ against $\frac{1}{n}$.}
\label{fig:sigmaa}
\end{subfigure}\hfill
\begin{subfigure}[b]{0.49\textwidth}
\includegraphics[width=\textwidth]{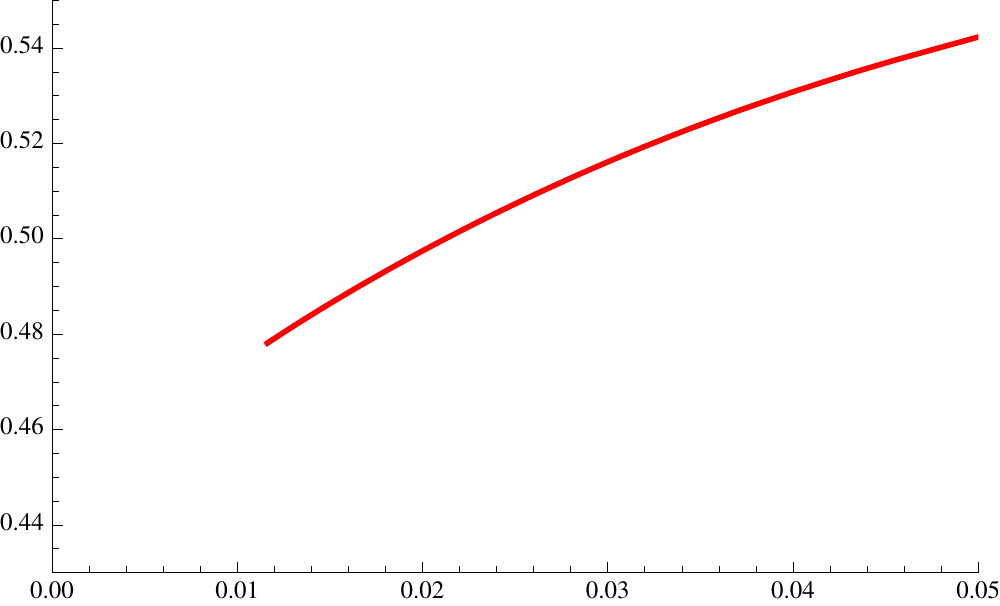}
\caption{Plot of estimates of $\sigma$  against $\frac{1}{{n}}$.}
\label{fig:sigmab}
\end{subfigure}
\caption{Two different estimators of exponent $\sigma.$ Both are consistent with $\sigma=3/7=0.42857\ldots.$}
\label{fig:sigma}
\end{figure}

We have repeated this calculation for bridges with other values of $u$, and find that this exponent estimate is numerically robust for $u$ in the range $e^{-u} \in [0.2,0.8].$ Outside this range we see crossover effects to other behaviour, as appropriate for $u=0$ and $u=\infty$. We have also repeated the calculation for SAWs with $u=\log 2$ (and other values of $u$), and find good numerical evidence for $\sigma \approx 0.43,$ supporting the expectation that $\sigma = 3/7$ exactly for both models.

\subsection{Estimation of $\mu_1.$}

We now take as known the value of the exponent $\sigma,$ and, to very high precision, the value of the leading growth constant $e^{\beta}.$ We next estimate the value of the ($u$-dependent) subdominant growth constant $\mu_1.$ Recall that, for random walks~\eqref{eqn:rw_asymps}, the $u$-dependence is given by 
\[\log{\mu_1} = -\lambda_u =  const \cdot u^{1-\sigma}.\]
(The same also occurs for compressed Dyck paths~\cite{G14a}.) We find compelling evidence for similar $u$-dependence in the case of both pushed bridges and pushed SAWs. We also find compelling evidence that, unsurprisingly, $\mu_1({\rm SAWs}) = \mu_1({\rm bridges}).$

We used several methods to estimate the value of $\mu_1.$ In all cases we utilised both the assumed value of the growth constant $e^{\beta}$ and the exponent $\sigma = 3/7.$ No one method was especially successful for all values of $u.$ As is common with series analysis, all methods involved some sort of extrapolation. It was frequently difficult to form a view as to the limiting extrapolant in some cases, usually due to a sequence of estimates having a turning point near the end of the range. Nevertheless, by using several methods, and looking at the spread between estimates, we were able to make moderately precise estimates.

The methods we used were the following:
(i) We first formed the logarithm of the ratio of successive coefficients (or, equivalently, the logarithm of the square root of the ratio of successive coefficients, $\sqrt(a_n/a_{n-2}),$ which reduces odd-even oscillations in the ratios). 
$$\log \left ( \frac{a_n}{a_{n-1}} \right ) - \log{e^{\beta}} = \frac{\sigma \log{ \mu_1}}{n^{1-\sigma}} + \frac{g}{n} + \frac{\sigma^2 \log^2{\mu_1}}{2n^{2-2\sigma}} + O \left ( n^{-2-\sigma} \right ),$$ in the case of both sequences $a_n = b_n$ and $a_n = c^+_n$. We calculated the series expansion of the left-hand side of the above equation, known to order $n_{\rm max}=86$ in the case of bridges, and $n_{\rm max}=60$ in the case of SAWs. We fitted the terms from order $n_{\rm min}$ to order $n_{\rm max}$ to the first two, and then the first three terms in the asymptotic expansion, using the Maple procedure \texttt{NonlinearFit}. In each case we steadily increased the value of $n_{\rm min}$ from $(n_{\rm max}-30)$ to $(n_{\rm max}-4).$ (Attempts using the first four terms were unsuccessful, due to lack of convergence).

(ii) The second method involved directly fitting to the coefficients. Since
$$\log{a_n} -n\cdot {\beta} = n^\sigma \log{\mu_1} + g\log{n} + const. +  O \left ( n^{-\sigma} \right ),$$ we could fit successive coefficient triples 
($a_{k-1},\,\, a_k \,\, a_{k+1}$), with $k$ increasing up to $n_{\rm max}-1,$ estimating the three unknowns, $\log{\mu_1}, \,\, g, \,\, const.$ from the three coefficients. We also fitted successive coefficient pairs ($a_{k-1},\,\, a_k $) to estimate the first two unknowns.

The third method (iii) used was a blend of the previous two methods, in that we fitted the logarithm of the ratio of successive coefficients to its asymptotic form by the second method -- that is, a fit using successive pairs or triple of coefficients.

The next two methods, (iv) and (v) involved estimating $\mu_1$ directly from the ratio of successive coefficients, as given by~\eqref{eq:rn}.
Since $$s_n=r_n \cdot e^{-\beta}-1 = \left (\frac{\sigma \log {\mu_1}}{n^{1-\sigma}} +\frac{g}{n}+\frac{\sigma^2 \log^2 \mu_1}{n^{2-2\sigma}}+ O\left (\frac{1}{n^{2-\sigma}}\right ) \right ),$$ a plot of $s_n$ against $n^{\sigma-1}$ should be linear, with slope $\sigma \log{\mu_1}.$
Slight curvature was evident, so we extrapolated the local gradients.

In an attempt to get more rapid convergence, we next, (v), eliminated the term $O(1/n)$ by forming the sequence
$$t_n = n\cdot s_n - (n-1)\cdot s_{n-1} = \left (\frac{\sigma^2 \log {\mu_1}}{n^{1-\sigma}} +\frac{\sigma^2(2\sigma-1) \log^2 \mu_1}{n^{2-2\sigma}}+ O\left (\frac{1}{n^{2-\sigma}}\right ) \right ).$$ As above, a plot of $t_n$ against $n^{\sigma-1}$ should be linear, with slope $\sigma^2 \log{\mu_1}.$ Again, slight curvature was evident, so we again extrapolated the local gradients.

The final method, (vi), we used was based on the observation that 
$$d_n=B_n(u)^{n^{-\sigma}} = 
\mu_1 \left (1 + \frac{g\log{n}}{n^\sigma} + O \left ( n^{-\sigma} \right ) \right ),$$
so $d_n$ provides a sequence of estimators of $\mu_1$ which can be extrapolated.

We show in Table \ref{tab:one} the estimates of $\log{\mu_1}$ obtained by the six methods described. The actual analysis comprises dozens of pages, so we only give a summary. Rather than attempt to give precise error bars, which in any series analysis are perhaps better described as confidence limits, we consider it more meaningful to quote parameters to a level of precision such that we expect uncertainties to be confined to the last quoted digit. From the various estimates, we first ignore obvious outliers, then simply take the average of the remaining estimates, and give this as our combined estimate. Comparing results for SAWs and bridges, it appears that $\mu_1$ is the same, up to uncertainty in the quoted best estimates.

A little numerical experimentation shows that the $u$-dependence is indistinguishable from $\log\mu_1 \propto u^{4/7}.$ We note that, like random walks~\eqref{eqn:rw_asymps} and Dyck paths~\cite{G14a}, this matches $\log\mu \propto u^{1-\sigma}$. Fitting the constant of proportionality to the combined estimates leads us to the result
$$\log{\mu_1} \approx -2.62u^{4/7}.$$ Experimentation with data sets for other $u$-values is consistent with this result. In order to compare with the estimates given below, this formula gives $\log{\mu_1} = -2.913, \,\, -2.125 \,\, -1.453$ for $e^{-u}=0.3, \,\, 0.5, \,\, 0.7$ respectively.

\begin{table}[htbp]
  \begin{center}
\begin{tabular}{|p{100pt} |p{60pt} p{60pt} p{60pt}|}
\hline
Method & $e^{-u}=0.3$ & $e^{-u}=0.5$ & $e^{-u}=0.7$ \\ \hline
(i) Bridges, 2 terms &$ > -2.98 $ & $ \approx -2.177 $ & $ > -1.51$ \\
(i) Bridges, 3 terms &$ \approx -2.89 $ & $ > -2.180 $ & $ < -1.30$ \\
(ii) Bridges, 2 terms &$-3.0  $ & $ -2.15 $ & $-1.5  $ \\
(ii) Bridges, 3 terms &$ -2.92 $ & $ n.c. $ & $ -1.46$ \\
(iii) Bridges, 2 terms &$-2.905 $ & $ -2.13 $ & $ -1.47$ \\
(iii) Bridges, 3 terms &$ n.c. $ & $ n.c. $ & $-1.46 $ \\
(iv) Bridges & $ -2.96 $&$-2.14$& $-1.47  $ \\
(v) Bridges & $ -2.92 $&$-2.14$& $n.c.  $ \\
(vi) Bridges & $ -3.0 $&$-2.21$& $ -1.5 $ \\
Combined Estimate & -2.92 & -2.14 & -1.465 \\
\hline
(i) SAWs, 2 terms &$ > -3.00 $ & $ >-2.21$ & $ > -1.55$ \\
(i) SAWs, 3 terms &n.c. & $ \approx -2.14 $ & $ > -1.485$ \\
(ii) SAWs, 2 terms &$ >-2.98 $ & $-2.13  $ & $ -1.5$ \\
(ii) SAWs, 3 terms &$\le -2.90.  $ & $ n.c. $ & $-1.45 $ \\
(iii) SAWs, 2 terms &$-2.87  $ &$-2.135  $ & $-1.47  $  \\
(iii) SAWs, 3 terms &$-2.92.  $ & $n.c.  $ & $ -1.45$ \\
(iv) SAWs & $ -2.84 $&$-2.12$& $ -1.42 $ \\
(v) SAWs & $ -2.89 $&$-2.15$& $ -1.47 $ \\
(vi) SAWs & $n.c.  $&$n.c$& $ n.c. $ \\
Combined Estimate & -2.91 & -2.13 & -1.45 \\
\hline
 \end{tabular}
  \caption{Estimates of $\log{\mu_1}$ for SAWs and bridges for various $e^{-u}$ values, 
  using the six methods described. ($n.c.$ means no convergence).}
  \label{tab:one}
\end{center}
\end{table}

\subsection{Estimation of the exponent $g.$}

In the previous section we estimated the value of $\mu_1$ from the series assuming the values of $e^{\beta}$ and $\sigma.$ Attempting to do the same in order to estimate the exponent characterising the sub-sub-dominant term $n^g$ making the same assumptions was unsuccessful.  That is to say, we could not come up with any method of analysis that gave a consistent estimate of $g$ for the SAW or bridge case. Note that we expect $g$ to be $u$-independent, but not necessarily to be the same for SAWs and bridges. 

However if we accept the conjecture (above) that $\mu_1({\rm SAWs})=\mu_1({\rm bridges}),$ one can estimate $g_{\rm SAWs}-g_{\rm bridges}$ by studying the asymptotics of the series formed by taking the term by term quotient of the SAW and the bridge series. Defining $v_n=c^+_n/b_n,$
we expect from the asymptotics that $v_n \sim const \cdot  n^\alpha,$ where $\alpha = g_{\rm SAWs}-g_{\rm bridges}.$

For $e^{-u}$ in the range $[0.1,0.6]$ we find estimates of $\alpha,$ obtained by calculating the gradient of a simple ratio plot of $r_v(n)=v_n/v_{n-1}$ against $1/n$, to lie consistently in the range $0.61$ to $0.63.$ For $e^{-u} \ge 0.7$ the exponent estimators  clearly have a turning point  at a value beyond the number of coefficients we have available. Accordingly, we cannot estimate $\alpha$ in this $u$-regime.

In order to estimate the individual exponents, we re-analyse the original series using not only our best estimate of $e^{\beta}$ and the value $\sigma=3/7,$ but also the estimate of $\mu_1$ obtained in the previous section. We can, using these values, divide the original series coefficients by $e^{n \beta} \cdot \mu_1(u)^{n^\sigma}.$ The resultant coefficients should behave asymptotically as $const  \cdot n^g.$ Thus $g$ can be estimated in a variety of ways, but given the uncertainty in the value of $\mu_1,$ there is no point in using sophisticated methods. Rather, we just look at the gradient of the ratio plot of the divided coefficents, when plotted against $1/n.$ The ratio of succesive coefficients in this case should behave as $1+g/n +o(g/n)$. Thus plotting the ratios against $1/n$ should give a straight line with gradient $g$.

In this way we estimate $g_{\rm SAWs} \approx 0.15, \,\, 0.15, \,\, 0.16$ for $e^{-u}=0.3, \,\, 0.5, \,\, 0.7$ respectively, and $g_{\rm bridges} \approx -0.5, \,\, -0.5, \,\, -0.44$ for $e^{-u}=0.3, \,\, 0.5, \,\, 0.7$ respectively. These results are reasonably consistent with the preceding estimate that the gap between the exponents should be in the range $0.61$ to $0.63.$ Our estimate for $g_{\rm bridges}$ has an uncertainty of about $\pm 0.05,$ while that of $g_{\rm SAWs}$ is $\pm0.02$, so our estimates have a difference of $0.65 \pm 0.07,$ which is reasonably consistent with the direct estimate found above.

The heuristic arguments in the previous section predict that $g_{\rm bridges} = -\frac{13}{28} = -0.4642857\ldots,$ in agreement with our analysis above, while $g_{\rm SAWs}=\frac{3}{16} = 0.18750,$ again in agreement with the above analysis. Similarly, $\alpha = \frac{73}{112}=0.651785\ldots,$ as against our direct estimates of $\alpha \approx 0.65$ and $\alpha \approx 0.62.$

\section{Conclusion}\label{sec:conclusion}

We have studied several models of self-avoiding walks (SAWs) and polygons (SAPs) which are constrained to lie in the upper half-plane and are subjected to a compressive force. The force is applied to the vertex or vertices of the walk located at the maximum distance above the boundary of the half-space. We have in particular focused on three types of objects: SAWs, SAPs and self-avoiding bridges. In each case, we have considered the partition function of objects of size $n$, with the aim of determining the asymptotic behaviour of these partition functions in the $n\to\infty$ limit.

We used the conjectured relation with the Schramm-Loewner evolution to predict the asymptotic forms of the partition functions, including the values of the exponents. These values, as stated in Prediction~\ref{pred:main}, are that (for SAWs, bridges and SAPs respectively)
\begin{align*}
\Cmax_n(u) &\sim A^+_u \, n^{3/16} \, \exp\{-\lambda_1 u^{4/7} \, n^{3/7}\},\\
B_n(u) & \sim A^b_u \, n^{-13/28} \, \exp\{-\lambda_1 u^{4/7} \, n^{3/7}\},\\
\Pmax_n(u) &\sim A^{p+}_u\, n^{-11/7}\, \exp\{-\lambda_2 u^{4/7}\,n^{3/7}\},
\end{align*}
where $u>0$ corresponds to the compressive force regime, $\lambda_1,\lambda_2>0$ are constants and $A^+_u, A^b_u, A^{p+}_u$ are functions of $u$.

Finally, we tested the predictions for SAWs and bridges by analysing exact enumeration data and found them to be in agreement (within the range of uncertainty resulting from analysis of limited series data). A similar analysis for SAPs has been performed elsewhere.

\section*{Acknowledgements}

This work was supported by an award (IJ) under the Merit Allocation Scheme on the NCI National Facility at the ANU
and  by  funding under the Australian Research Council's Discovery Projects scheme by the grant
DP140101110 (AJG and IJ).
Gregory Lawler is supported by National Science Foundation
grant DMS-0907143. Nicholas Beaton is supported by the Pacific Institute for the Mathematical Sciences.

\bibliographystyle{amsplain}


\appendix

\section{Asymptotics of an integral}

Several times in this paper, we used the asymptotics 
of a particular integral.  Here we justify that relation.
\begin{lem}\label{lem:integralest}
If $\alpha,b, k >0,r \in \R$, then as $m \rightarrow \infty$, 
\begin{equation}\label{integralest}
 \int_0^\infty y^r \, \exp \left\{- \left(\frac{km }{\alpha (by)^\alpha} + by\right) \right\} \,dy \sim   C \, m^{\frac {2r+1}{2(1+\alpha)}}
 \, \exp \left\{- \lambda \,m ^{\frac 1{1+\alpha}} \right\}
   , 
   \end{equation}
where
\[  C =  
\,  b^{-(1+r)} \, k^{ \frac {2r+1}{2(1+\alpha)}} \sqrt {\frac{2\pi  }
 { 1+\alpha  }},  \qquad
  \lambda  = \frac{ (\alpha + 1) \, k^{\frac{1}{1+\alpha}}}
    {\alpha}
     =  \frac{ (\alpha + 1)}{\alpha} \, \left(\frac{k}{b^\alpha}
     \right)^{\frac{1}{1+\alpha}}   \, b^{\frac{\alpha}{1 + \alpha}}.  \]
\end{lem}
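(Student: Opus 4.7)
\medskip

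\noindent\textbf{Proof plan for Lemma~\ref{lem:integralest}.} The integral has the classical Laplace form with large parameter $m$, so the natural approach is to locate the unique minimizer of the exponent, expand to second order, and apply Laplace's method. First I would absorb the constant $b$ via the substitution $x = by$, $dy = dx/b$, which converts the integral into
\[
 b^{-(r+1)} \int_0^\infty x^r \, \exp\{-\phi_m(x)\}\, dx, \qquad \phi_m(x) = \frac{km}{\alpha\, x^\alpha} + x.
\]
This separates the $b$-dependence cleanly into the prefactor $b^{-(r+1)}$ appearing in $C$.

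Next I would locate the minimizer of $\phi_m$: $\phi_m'(x) = 1 - km/x^{\alpha+1}$ vanishes at the unique positive critical point $x_0 = (km)^{1/(\alpha+1)}$, and this is a strict minimum since $\phi_m''(x) = (\alpha+1) km / x^{\alpha+2} > 0$. At the critical point,
\[
 \phi_m(x_0) = \frac{x_0}{\alpha} + x_0 = \frac{\alpha+1}{\alpha}\,(km)^{1/(\alpha+1)}, \qquad \phi_m''(x_0) = \frac{\alpha+1}{x_0} = (\alpha+1)(km)^{-1/(\alpha+1)},
\]
which already reproduces the claimed exponent $\lambda\, m^{1/(1+\alpha)}$ and pins down $\lambda = \tfrac{\alpha+1}{\alpha} k^{1/(\alpha+1)}$.

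The main step is the standard Laplace/saddle-point estimate. I would write $\phi_m(x) = \phi_m(x_0) + \tfrac12 \phi_m''(x_0)(x-x_0)^2 + O((x-x_0)^3/x_0^2)$, rescale $x = x_0 + u/\sqrt{\phi_m''(x_0)}$, and use dominated convergence on the rescaled integrand $(x_0 + u/\sqrt{\phi_m''(x_0)})^r e^{-u^2/2}\cdot(1 + o(1))$. Since $x_0 \to \infty$, the prefactor $x^r$ is well-approximated by $x_0^r$ in a window of width $o(x_0)$ about $x_0$, and away from this window the exponent $\phi_m - \phi_m(x_0)$ grows without bound uniformly in $m$ (both as $x \to 0$ through the $x^{-\alpha}$ term and as $x \to \infty$ through the linear term), so these tails contribute negligibly. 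This yields
\[
 \int_0^\infty x^r e^{-\phi_m(x)}\,dx \sim x_0^r\,\sqrt{\frac{2\pi}{\phi_m''(x_0)}}\,e^{-\phi_m(x_0)} = x_0^{r+1/2}\sqrt{\frac{2\pi}{\alpha+1}}\,e^{-\phi_m(x_0)}.
\]
Since $x_0^{r+1/2} = k^{(2r+1)/(2(\alpha+1))} m^{(2r+1)/(2(\alpha+1))}$, multiplying by $b^{-(r+1)}$ exactly reproduces the constant $C$ stated in the lemma.

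The only real subtlety, and so the main obstacle, is justifying that the Gaussian approximation remains valid with a polynomial weight $x^r$ and that the tail contributions are genuinely negligible uniformly in $m$. For $r \geq 0$ this is routine; for $r < 0$ one needs to check that the lower tail near $x = 0$ (where $x^r$ blows up) is controlled by the explosive $km/(\alpha x^\alpha)$ term in the exponent, which it clearly is since that term dominates any negative power of $x$ as $x \downarrow 0$. All other estimates are standard Watson/Laplace-lemma bookkeeping.
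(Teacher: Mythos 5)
Your proposal is correct and follows essentially the same route as the paper: after the substitution $x=by$, the paper's proof is exactly Laplace's method centered at the minimizer (it writes $km=n^{1+\alpha}$ so that your $x_0$ is $n$, substitutes $x=n^{1/2}y+n$, expands the exponent $g(t)=\alpha^{-1}((1+t)^{-\alpha}-1)+t$ to second order with $g''(0)=\alpha+1$, and controls the tails), yielding the same $\lambda=\frac{\alpha+1}{\alpha}k^{1/(1+\alpha)}$ and the same Gaussian factor $\sqrt{2\pi/(\alpha+1)}$. Your critical-point data, the resulting constant $C$, and the identification of the tail estimates (including the $r<0$ case near $x=0$) as the only delicate points all match the paper's argument.
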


\begin{proof}
Using the change of variables $x = by$, we see it suffices
to prove the result for $b=1$.  We write $km = n^{1 + \alpha}$,
so it suffices to show that 
\[  \int_0^\infty x^r \,  \exp \left\{- \left(\frac{n^{ 1 + \alpha }}{\alpha x^\alpha} + x \right) \right\} \,dx = 
      \sqrt {\frac{2\pi }{1+\alpha}} \, n^{r + \frac 12}
    \, e^{-n(\frac{\alpha + 1}{\alpha})} \, \left[1+O (n^{-1/2})
     \right].\]
Using the substitution $x = n^{1/2} y + n$,
 \begin{eqnarray*}
 \lefteqn{  \int_0^\infty x^r \, \exp \left\{- \left(\frac{n^{ 1 + \alpha }}{\alpha x^\alpha}
 + x \right) \right\} \,dx} \hspace{1in} \\
  & = & n^{r + \frac 12}  \int_{-\sqrt n}^\infty
    \left(1 + \frac{y}{\sqrt n}
    \right)^{r} \exp\left\{-n\, \left(\frac {n^{  \alpha}}
    {\alpha (n^{1/2} y + n)^\alpha} + \frac y{\sqrt n} + 1
      \right) \right\} \, dy \\
  & = & n^{r + \frac 12} \, e^{-n(1+ \frac 1\alpha)}
   \int_{-\sqrt n}^\infty \left(1 +(y/\sqrt n)\right)^r\,
    \exp\left\{-n\,g(y/\sqrt n)
    \ \right\} \, dy ,
 \end{eqnarray*}
  where
  \[  g(t) = g_\alpha(t) =  \alpha^{-1} \left((t+1)^{-\alpha} - 1\right) + t.\]
  Using the substitution $t = y/\sqrt n$,
 \[ n^{-(r + \frac 12 ) } \, e^{n(1 + \frac 1 \alpha)}
   \int_0^\infty x^r\,\exp \left\{- \left(\frac{n^{ 1 + \alpha }}{\alpha x^\alpha}
 + x \right) \right\} \,dx =  \sqrt n \, \int_{-1}^\infty 
 (1+t)^r \, \exp\left\{-n\, g(t)
     \right\} \, dt .\]
Since $g(0) = 0, g'(0) = 0, g''(0) = \alpha +1$,
we  have  
 \[   g(t) = \frac {\alpha + 1} 2 \, t^2 + O(t^3), 
 \]
 which is valid for $|t| \leq 1/2$. Note that there exists
 $c> 0$ such that
\[   g(t) \geq c\, n^{-2/3} , \;\;\;\ |t| \geq n^{-1/3} , \]
\[    g(t) \geq   t - \alpha^{-1}  \;\;\;\; t \geq 0 ,\]
\[    g(t) \geq \frac{1}{(1+t)^\alpha}
  - \frac{\alpha + 1}{\alpha} , \;\;\;\; -1 < t \leq - \frac 12.\]
Using these estimates in order, we see that 
\[    \int_{-1 +n^{-1}}^{2/\alpha}(1+t)^{r}
 \exp\{-n g(t)\} \, \mathbbm 1_{\{|t| \geq n^{-1/3}\}}
  \, dt =  O\left( n^{|r|} \,
  \exp\left\{- c n^{1/3}\right\} \right) = o(n^{-1/2}), \]
  \[   \int_{2/\alpha}^\infty (1+t)^{r}\exp\{- n g(t)\} \, dt
  \leq   e^{n/\alpha}
   \, \int_{2/\alpha}^\infty (1+t)^{r}\exp\{- n t\} \, dt
    \leq  O(n^{|r|} \, e^{-n/\alpha})  = o(n^{-1/2}), \]
  \[   \int_{-1} ^{-1 + n^{-1}}(1+t)^{r}
 \exp\{-n g(t)\} 
  \, dt =  o(n^{-1/2}).\]
Hence, 
\[  
\sqrt n \, \int_{-1}^\infty  (1+t)^{r}\,
 \exp\left\{-n\, g(t)
     \right\} \, dt  =   o(n^{-1/2}) + 
\sqrt n \, \int_{-n^{-1/3}}^{n^{-1/3}} (1+t)^{r}\,
 \exp\left\{-n\, g(t)
     \right\} \, dt.\]
For $|t| \leq n^{-1/3}$, we write
\begin{align*}
 (1 + t)^r \,  e^{-n g(t)} & = [1+ O(t)]\, \exp \left\{- n
 \left(\frac{\alpha + 1}{
2}\, t^2  + O(t^3)\right)\right\}\\
& = 
   e^{-n(\alpha +1)t^2/2} \, [1 + O(t) +  O(nt^3)].
  \end{align*}
Here we have restricted to $|t| \leq n^{-1/3}$ so that
$nt^3 = O(1)$ and hence the expansion
$ e^{O(nt^3)} = 1 + O(nt^3)  $ is valid.
 
Then
\[ 
\sqrt n \, \int_{-n^{-1/3}}^{n^{-1/3}}  
  e^{-n(\alpha +1)t^2/2}  \, dt    = 
  \int_{-n^{1/6}}^{n^{1/6} }
     e^{-(\alpha +1)s^2/2} \, ds 
      =   \sqrt{\frac{2\pi}{(\alpha +1)}} 
    -o(n^{-1/2}),
\]
 and 
\[  
 \sqrt n \, \int_{-n^{-1/3}}^{n^{-1/3}}
  \left(|t| + n|t|^3\right) \, e^{-n(\alpha +1)t^2/2}  \, dt 
  \leq     \int_{-\infty}
   ^{\infty}  \frac{|s| + |s|^3}{\sqrt n}  
    \,  e^{-(\alpha +1)s^2/2}  \, ds  = O(n^{-1/2}) . \]
    \end{proof}
    
\end{document}